\theoremstyle{plain}
\newtheorem{theorem}{Theorem}[section]
\newtheorem{lemma}[theorem]{Lemma}
\newtheorem{corollary}[theorem]{Corollary}
\theoremstyle{definition}
\newtheorem{assumption}[theorem]{Assumption}
\theoremstyle{remark}
\title{Modeling Other Players with Bayesian Beliefs for Games with Incomplete Information}
\author{
    Zuyuan Zhang\\
    The George Washington University\\
    \texttt{zuyuan.zhang@gwu.edu}\\
    \And
    Mahdi Imani\\
    Northeastern University\\
    \texttt{m.imani@northeastern.edu}
    \And
    Tian Lan\\
    The George Washington University\\
    \texttt{tlan@gwu.edu}
}
\begin{document}

\maketitle

\begin{abstract}
Bayesian games model interactive decision-making where players have incomplete information -- e.g., regarding payoffs and private data on players’ strategies and preferences -- and must actively reason and update their belief models (with regard to such information) using observation and interaction history. Existing work on counterfactual regret minimization have shown great success for games with complete or imperfect information, but not for Bayesian games. To this end, we introduced a new CFR algorithm: \textbf{Bayesian-CFR} and analyze
its regret bound with respect to Bayesian Nash Equilibria in Bayesian games.
First, we present a method for updating the posterior distribution of beliefs about the game and about other players’ types. The method uses a kernel-density estimate and is shown to converge to the true distribution. 
Second, we define Bayesian regret and  present a Bayesian-CFR minimization algorithm for computing the Bayesian Nash equilibrium. 
Finally, we extend this new approach to other existing algorithms, such as Bayesian-CFR+ and Deep Bayesian CFR. 
Experimental results show that our proposed solutions significantly outperform existing methods in classical Texas Hold'em games. 
\end{abstract}

\section{Introduction}
Bayesian Games, also known as Games with Incomplete Information~\cite{zamir2020bayesian}, are models of interactive decision situations, where the players (i.e., decision-making agents) have only partial information about the data of the game and about the other players, such as the state of the game, payoffs, and private information on players' strategies and preferences. It provides a power model for strategic interaction, as the overwhelming
majority of real-life situations involve players that have only partial information and must reason about behaviors, goals, and beliefs of the other players and autonomous agents~\cite{Albrecht_2018}. Bayesian Games are 
widely employed in various real-world domains, such as gaming~\cite{zamir2020bayesian}, transportation planning~\cite{bernhard2019addressing}, economics~\cite{george1970market}, and networked systems~\cite{maccarone2021stochastic,asheralieva2021fast}.

Counterfactual Regret Minimization (CFR)~\cite{DBLP:conf/nips/ZinkevichJBP07} is a state-of-the-art approach for computing a Nash equilibrium. 
The idea of CFR is to compute or approximate a Nash equilibrium by updating strategies based on the cumulative regret calculated among different agents. It has also been extended to large and complex games using efficient sampling~\cite{lanctot2009monte}, deep neural networks~\cite{brown2019deep}, and hierarchical policies~\cite{chen2023hierarchical}. CFR has been successfully applied to Complete Information Games (where data of the game and other players are common knowledge) and Imperfect Information Games (where players are
simply unaware of the actions chosen by others)~\cite{DBLP:conf/nips/ZinkevichJBP07,chen2023hierarchical}. In Bayesian Games, the use of CFR to compute Bayesian Nash Equilibria (BNE)~\cite{ferguson2020course} -- which requires actively reasoning and updating belief models of other players -- has not been considered. 

In this paper, we propose Bayesian Counterfactual Regret (Bayesian-CFR) minimization and analyze its regret bound with respect to Bayesian Nash Equilibria. In particular, players maintain beliefs about the game and about other players' types~\cite{zamir2020bayesian} and update their beliefs throughout the extensive-form game. Thus, players have the ability to reason about others by constructing belief models of the other players. The belief model of player $\chi$ is a function that takes as input some locally observed interaction history $O_\chi$ and returns a prediction $Pr_\chi(\theta|O_\chi)$ of some property of interest (denoted by $\Theta$) regarding the modelled players, such as private information on their payoffs,  strategies, and preferences. Our goal is to compute BNE. It is defined as a strategy profile that maximizes the expected payoff for each player given their beliefs and given the strategies played by the other players~\cite{zamir2020bayesian}. Theoretical analysis of Bayesian games often use the seminal Harsanyi's Model~\cite{zamir2020bayesian}. However, Bayesian-CFR minimization for computing BNE in practical problems like Texas hold'em~\cite{southey2012bayes} has not been considered. CFR-based methods and regret bound analysis cannot be directly applied due to the evolving beliefs (and their combinatorial dynamics) in the extensive-form game.

We develop a Bayesian-CFR minimization algorithm for computing BNE. By formulating an immediate Bayesian counterfactual regret, we prove that minimizing such immediate Bayesian counterfactual regret can also minimize overall Bayesian regret. The algorithm leverages a kernel-density approach to recursively and efficiently update the posterior belief distribution of the players, using their local history. The posterior is shown to result in an unbiased estimate of the true distribution. Next, for complex and large Bayesian games, e.g., due to growing number of opponents~\cite{harsanyi1967games}, inspired by CFR+~\cite{tammelin2014solving} and Deep CFR~\cite{brown2019deep}, we further propose Bayesian CFR+ and Bayesian Deep-CFR, by considering the Bayesian cumulative counterfactual regret and leveraging deep neural network representations, respectively. The proposed Bayesian-CFR minimization algorithm is guaranteed to compute BNE of the Bayesian game, with a new regret bound depending on an extra term $\Delta_{\Theta}^T$ with respect to time $T$ and the dimension of belief models ${\Theta}$. 

The proposed solutions are evaluated on a Texas Hold'em poker environment, a popular and real-life game, by evaluating the resulting exploitability of our algorithms against state-of-the-art baselines, including CFR~\cite{zinkevich2007regret}, CFR+~\cite{tammelin2014solving}, Deep-CFR~\cite{brown2019deep}, MCCFR~\cite{lanctot2009monte}, and DQN~\cite{li2018double}. The experiment results, using a wide range of player behaviors (e.g., aggressive, neutral, conservative, pure and mixed player types) demonstrate the superior performance of the proposed Bayesian-CFR framework.

\section{Background}

\subsection{Extensive-Form Bayesian Games}

Bayesian Games are strategic decision-making models where players lack complete knowledge about some elements of the game, such as payoffs or private information of other players. In contrast to Imperfect Information Games~\cite{ferguson2020course,DBLP:conf/nips/ZinkevichJBP07}, where players simply may not know the actions chosen by others but understand the data of the game and the players, Bayesian games further relax the
assumption, to enable modeling of the overwhelming majority of real-life situations, where players have only partial information about the payoff relevant data of the game and the other players. Theoretical analysis of Bayesian games includes~\cite{harsanyi1967games}. A comprehensive survey of Bayesian games can be found in~\cite{dekel2004learning}.

Extensive-form games~\cite{shoham2008multiagent} are a tree-based method used to describe games with incomplete information.  An extensive-form game can be described as a tuple $\Gamma = <\mathcal{N},H,Z,P,u,I,\sigma_{c}>$,
where $\mathcal{N} = \{1,...,N\}$ is a set of players, $H$ is a set of histories. The root node of the game tree in $H$ is the empty sequence $\emptyset$, and every sequence prefix in $H$ is also in $H$. Given $h,h^{'}\in H$, we write $h^{'} \sqsubseteq h$ if $h^{'} $is a prefix of $h$.
$Z\subset H$ is the set of the terminal histories, and $A(h)=\{a:(h,a)\in H\}$ is the set of available actions at a non-terminal history $h\in H$. $P$ is the player function where $P(h)$ is the player who takes an action at the history $h$. This function can be expressed as $P(h) \rightarrow N \cup\{c\}$, where
$c$ denotes the chance player, which represents a stochastic event outside of the players' control. If $P(h) = c$, then chance determines the action taken at history $h$.
Information sets $I_{i}\in \mathcal{I}_{i} $ corresponds to one decision point of player $i$ which means that $P(h_{1}) = P(h_{2})$ and $A(h_{1}) = A(h_{2})$ of any $h_{1},h_{2}\in I_{i}$.
For convenience, we use $A(I_{i})$ to represent the set $A(h)$ and $P(I_{i})$ to represent the player $P(h)$ for any $h\in I_{i}$. For each player $i\in N$, a utility function is a mapping $u_{i}
:Z \rightarrow \mathbb{R}$. 

In this paper, we consider extensive-form Bayesian games, which are modeled as 
$\Gamma = \langle N, H, Z, Pr, u, I, \sigma_c, \Theta, Pr \rangle$. Here $\Theta$ represents a set of possible types of players (e.g., encapsulating their payoffs, strategies, and preferences), and $Pr(\theta)$ for $\theta\in\Theta$ denotes the players' true distribution over the type space. During the game, each player $\chi$ conditions on her local observations/interaction history $O_\chi$ to update her belief $Pr_\chi(\theta|O_\chi)$. The utility function is defined such that for each player $i \in N$, a utility function is a mapping $u_{i,\theta} : Z \rightarrow \mathbb{R}$.
A player's behavior strategy $\sigma_i$ is a function mapping every information set of player $i$ to a probability distribution over $A(I_{i})$.
A strategy profile $\sigma$ consists of a strategy for each player $\sigma_1, ..., \sigma_n$ with $\sigma_{-i}$ referring to all the strategies in $\sigma$ except $\sigma_{i}$.
In Bayesian games, the strategy $\sigma_{\Theta}$ is denoted as $\sigma_{1,\theta}, ..., \sigma_{n,\theta}$.
Let $\pi^{\sigma_{\Theta}}(h)$ be the reaching probability of history $h$ if players choose actions according to $\sigma_{\Theta}$.
Given a strategy profile $\sigma_{\Theta}$, the overall value to player $i$ is the expected payoff 
of the resulting terminal node, i.e.,
\begin{eqnarray}
    u_{i}(\sigma_{\Theta}) = \sum_{\theta\in \Theta}Pr_i(\theta)\sum_{h\in Z} \pi^{\sigma_{\Theta}}(h)u_{i,\theta}(h).
\end{eqnarray}
Bayesian Nash Equilibrium (BNE) is used as a solution to the problem. It refers to a strategy configuration where no party can gain a higher expected payoff by unilaterally changing their strategy, given their beliefs and given the strategies played by the other players. Formally, we can define BNE 
as follows:
\begin{equation}
\label{eql:1}
    u_{i,\Theta}(\sigma_{\Theta}) \geq \max_{\sigma^{'}_{\Theta}\in\sum_{i,\Theta}} u_{i,\Theta}(\sigma^{'}_{i,\Theta},\sigma_{-i,\Theta}), 
\end{equation}
where we use subscript $\Theta$ to denote the values with respect to posterior distribution of the belief model.
A near-BAE or $\epsilon$-Bayesian Nash equilibrium is a strategy profile $\sigma$, where it is not possible for any player to gain more than $\epsilon$ in expected payoff, i.e.,
\begin{equation}
    u_{i,\Theta}(\sigma_{\Theta}) + \epsilon \geq \max_{\sigma^{'}_{\Theta}\in\sum_{i,\Theta}} u_{i,\Theta}(\sigma^{'}_{i,\Theta},\sigma_{-i,\Theta}). 
\end{equation}

\subsection{Counterfactual Regret Minimization in imperfect game}
{CFR} is a family of iterative algorithms that are the most popular approach to approximately solving NE of large and complex games~\cite{zinkevich2007regret,lanctot2009monte,brown2019deep,chen2023hierarchical}.
This algorithm is mainly used in repeated games or extensive games, either with complete information or with imperfect information (i.e., where players are simply unaware of the actions chosen by others).
Let $\sigma^{t}_i$ be the strategy used by player $i$ on round $t$. 
Define $u_{i}(\sigma,h)$ as the expected utility of player $i$ given that the history $h$ is reached, and all players play according to strategy $\sigma$ from that point on.
Let also $u_i(\sigma,h\cdot a)$ be the expected utility of player $i$ given the history $h$, where all players play according to strategy $\sigma$ except for the $i$th player who selects action $a$ in the history $h$.
Formally, $u_{i}(\sigma,h) = \sum_{z\in Z} \pi^{\sigma}(h\cdot z)u_{i}(z)$ and $u_{i}(\sigma,h\cdot a) = \sum_{z\in Z} \pi^{\sigma}(h \cdot a,z)u_{i}(z)$.

The \textit{counterfactual value} $u_{i}^{\sigma}(I)$ for imperfect information games is the expected value of an information set $I$ given that player $i$ tries to reach it. This value is the weighted average of the value of each history in an information set. The weight is proportional to the contribution of all players other than $i$ to reach each history. Thus, $u_{i}^{\sigma}(I) = \sum_{h\in I} \pi_{-i}^{\sigma}(h)\sum_{z\in Z} \pi^{\sigma}(h \cdot z)u_{i}(z)$. For any action $a\in A(I)$, the counterfactual value of an action $a$ is $u_{i}^{\sigma}(I,a) = \sum_{h\in I} \pi_{-i}^{\sigma}(h)\sum_{z\in Z} \pi^{\sigma}(h \cdot a, z)u_{i}(z)$.
The \textit{instantaneous regret} for action $a$ in information set $I$ on iteration $t$ is $r^{t}(I,a) = u_{P(I)}^{\sigma^{t}}(I,a) - u_{P(I)}^{\sigma^{t}}(I)$ 
{where P(I) is the player function mapping to player from information set $I$.}

The cumulative regret for action $a$ in $I$ on iteration $T$ is $R^{T}(I,a) = \sum_{t=1}^{T}r^{t}(I,a)$.
In CFR, players use \textit{Regret Matching} to pick a distribution over actions in an information set in proportion to the positive cumulative regret on those actions. 
Formally, on iteration $T$, player $i$ selects actions $a\in A(I)$ according to probability:
\begin{equation}
\label{eqn:strategy}
\sigma^{T}(I,a) = 
\left\{
\begin{aligned}
    &\frac{R_{+}^{T-1}(I,a)}{\sum_{b\in A(I)}R_{+}^{T-1}(I,b)} & if \sum_{b\in A(I)} R_{+}^{T-1}(I,b) > 0\\
    &\frac{1}{\lvert A(I) \rvert} & otherwise
    \end{aligned}
\right.
\end{equation}

where $R_{+}^{T}(I,a) = \max(R^{T}(I,a),0)$ which can sets probabilities proportional to the positive regrets: $\sigma^{T}(I,a)\propto \max(R^{T}(I,a),0)$.
If a player updates strategy according to CFR in every iteration, then on iteration $T$, $R^{T}(I) \leq \Delta_{i}\sqrt{\lvert A_{i} \rvert}\sqrt{T}$ where $\Delta_{i}=\max_{z}u_{i}(z) -\min_{z}u_{i}(z)$ is the bound of utility of player $i$.
Moreover, $R_{i}^{T} \leq \sum_{I\in \mathcal{I}_{i}}R^{T}(I)\leq \lvert \mathcal{I}_{i} \rvert \Delta_{i} \sqrt{\lvert A_{i} \rvert} \sqrt{T}$. Therefore, as $T \rightarrow \infty$, $\frac{R_{i}^{T}}{T} \rightarrow 0$.
In  zero-sum games, if all players' average regret $\frac{R_{i}^{T}}{T} \leq \epsilon$, CFR guarantees that then their average strategies $(\bar{\sigma}_{1}^{T},\bar{\sigma}_{2}^{T})$ form a $2\epsilon$-equilibrium~\cite{zinkevich2007regret} as $T\rightarrow \infty$.

\section{Bayesian-CFR}

Players in Bayesian games need to form and maintain local beliefs about the unknown properties/payoffs and reason about other players' behaviors~\cite{zamir2020bayesian}, throughout the extensive-form game. 
This necessitates reasoning from a Bayesian perspective and adjusting strategies accordingly, mirroring real-world scenarios of strategic interaction under uncertainty. A comprehensive survey on autonomous agents modelling other agents is provided in~\cite{Albrecht_2018}. In this paper, we propose a novel framework for Bayesian-CFR. It leverages a kernel-density posterior to update the players' belief models $Pr_{\chi}(\theta|O_{\chi})$ from local observations/interaction history $O_{\chi}$. The method is shown to yield an unbiased estimate of the true distribution $Pr_\chi$.  Then, we define a Bayesian average overall regret $R_{i,\Theta}^T$ over players' beliefs. Next, we show that minimizing immediate Bayesian counterfactual regret can also minimize overall Bayesian regret and thus propose a Bayesian-CFR algorithm that is proven to compute the BNE.  Finally, we also extend the proposed algorithm to Bayesian CFR+ and Bayesian Deep-CFR, by considering the Bayesian cumulative counterfactual regret and leveraging deep neural network representations, respectively. Regret bound of the proposed deep Bayesian-CFR is characterized. Due to space limitation, we collect proofs of all theorems and lemmas in the appendix, while provide some brief discussions. Pseudo code of all proposed algorithms, including Bayesian-CFR, Bayesian-CFR+, and Deep Bayesian-CFR are also included in the appendix.

\subsection{Update Posterior: Evidence from the other players}

We provide details on our proposed method for posterior belief update. At any stage of the Bayesian game, we denote the observations/interactions history of player ${\chi}$ by $O_{\chi} = \{h_1, h_2, \ldots, h_n\}$ in $n$ previous stages. 
We make the following assumptions about players' behavior:
Each player's type remains invariant with respect to time and does not change in the middle of a game. Each player attempts to maximize the total accumulated payoff according to some utility, given the belief of the game and of the other players' types. These are standard assumptions in Bayesian games. In real-life problems, players strategies often do not change drastically in the short term, and they act to maximize their own payoffs. Since each $h_i$ only depends on the current state of the game and type $\theta$, the posterior belief $Pr_\chi(\theta|O_{\chi})$ can be computed by:
\begin{equation}
\begin{aligned}
Pr_{\chi}(\theta|O_{\chi}) \propto P({\theta})  P(O_\chi|\theta) 
= P({\theta})  \prod_{i=1}^n  P(h_i|\theta),
\end{aligned}
\end{equation}
where $P({\theta})$ is a prior distribution and $P(h_i|\theta)$ can be calculated using the corresponding state of the game-tree (for CFR computation) for given $\theta$.

To make this belief update more efficient, we propose a kernel-density method for estimating the posterior probabilities. Inspired by~\cite{van2000asymptotic}, this is a non-parametric method and sample-based. Let $(h_j^{'},  \theta_j^{'})_{j=1}^m$ be a set of $m$ samples of observations/interactions given different types $\theta_j^{'}$ (which are collected from the game-tree). The idea is that under different $\theta\in\Theta$, the observations/interactions would also vary. 
Thus, we can use a conditional kernel density estimator (CKDE) with the $m$ samples as reference points:
\begin{equation}
\label{eqn:likelihood}
\begin{aligned}
\widehat{Pr}_\chi(h_i|\theta) = \frac{\widehat{Pr}_{\chi}(h_i,\theta)}{\widehat{Pr}_{\chi}(\theta)} = \sum\limits_{j=1}^{m}\frac{K(\frac{d_{s}(h,h_{j})}{w})K^{'}(\frac{d_r{(\theta,\theta_j})}{w^{'}})}{\sum_{l=1}^{m}K(\frac{d_r{(\theta,\theta_l})}{w^{'}})},
\end{aligned}
\end{equation}
where $K$ and $K^{'}$ are kernel functions with bandwidth $w,w^{'}>0$, and $d_{s}$ and $d_{r}$ are the distance function between histories and type distributions, respectively.
We compute the posterior belief estimate by applying the Bayes rule:
\begin{equation}
\label{eqn:bayesion}
\begin{aligned}
\widehat{Pr}_{m}^{n}(\theta|O_\chi) \propto P(\theta)\prod\limits_{i=1}^{n}\widehat{Pr}(h_i|\theta) = Pr(\theta)\prod\limits_{i=1}^{n}\sum\limits_{j=1}^{m}\frac{K(\frac{d_{s}(h_{i},h_{j})}{w})K^{'}(\frac{d_r{(\theta,\theta_j})}{w^{'}})}{\sum_{l=1}^{m}K(\frac{d_r{(\theta,\theta_l})}{w^{'}})}.
\end{aligned}
\end{equation}

To demonstrate the correctness of the {estimated posterior} algorithm, we provide the following proof to show that the CKDE method can converge to an unbiased estimate. First, we present the following assumptions:

\begin{assumption}
\label{ass:equal}
The distributions are considered equal if the likelihood of the observed values is the same under both types: $\theta_{1} \simeq \theta_{2}$ iff $||Pr(\cdot|\theta_{1}) - Pr(\cdot|\theta_{2})||_{L_{1}} = 0$
\end{assumption}

Based on assumption~\ref{ass:equal}, we can define the equivalence class $[\theta^{*}] = \{\theta : \theta \simeq \theta^{*}\}$ for the true player-type distribution $\theta^{*}$.

\begin{assumption}
$Pr(h|\theta)$ and $P(\theta)$ are square-integrable and twice differentiable with a square-integrable and continuous second-order derivative.
\end{assumption}

 \begin{lemma}\cite{mandyam2023kernel}
 \label{lem:1}
 Let $w_{m},w^{'}_{m} > 0$ be the bandwidths chosen to estimate the joint probability and marginal probability, respectively. with $mw_{m}^{d/2}\rightarrow \infty$ and $mw_{m}^{',d^{'}/2}\rightarrow \infty$ as $m\rightarrow\infty$, where $d$ is the dimension of $(h,\theta)$ and $d^{'}$ is the dimension of $\theta$. Then,
 \begin{equation}
     \lim\limits_{m\rightarrow \infty}\widehat{Pr}_{\chi}(O_{\chi}|\theta) = Pr_{\chi}(O_{\chi}|\theta)
 \end{equation}
 \end{lemma}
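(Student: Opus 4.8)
The plan is to exploit the ratio structure of the conditional estimator and establish convergence of its numerator and denominator separately before recombining. Writing the estimator as $\widehat{Pr}_\chi(h_i\mid\theta)=\widehat{Pr}_\chi(h_i,\theta)/\widehat{Pr}_\chi(\theta)$, I view $\widehat{Pr}_\chi(h_i,\theta)$ as a kernel (Parzen--Rosenblatt) estimate of the joint density over the pair $(h,\theta)$ with bandwidth $w_m$, and $\widehat{Pr}_\chi(\theta)$ as a kernel estimate of the marginal density over $\theta$ with bandwidth $w'_m$, both built from the $m$ reference samples $(h'_j,\theta'_j)_{j=1}^m$. The target for each is mean-square (hence in-probability) convergence to the corresponding true density.

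For each of the two estimators I would carry out the classical bias--variance decomposition. The bias is handled by a second-order Taylor expansion of the true density about the evaluation point; the twice-differentiability with continuous, square-integrable second derivative (the smoothness assumption) together with a symmetric kernel of vanishing first moment yields a bias of order $O(w_m^2)$ for the joint estimate and $O((w'_m)^2)$ for the marginal, both vanishing as the bandwidths shrink. The variance is controlled through the square-integrability of the densities and of the kernel: a change of variables expresses it in terms of $\int K^2$ and a power of the bandwidth, and the stated conditions $m\,w_m^{d/2}\to\infty$ and $m\,(w'_m)^{d'/2}\to\infty$ (together with $w_m,w'_m\to0$) are what drive this variance to zero. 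Adding squared bias and variance gives vanishing MSE, so $\widehat{Pr}_\chi(h_i,\theta)\to Pr_\chi(h_i,\theta)$ and $\widehat{Pr}_\chi(\theta)\to Pr_\chi(\theta)$ in probability.

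I would then recombine the two limits using Slutsky's theorem / the continuous mapping theorem applied to the map $(a,b)\mapsto a/b$, which is valid wherever the marginal $Pr_\chi(\theta)$ is bounded away from zero; the ratio then converges to $Pr_\chi(h_i,\theta)/Pr_\chi(\theta)=Pr_\chi(h_i\mid\theta)$. The equivalence-class construction of \cref{ass:equal} lets me restrict attention to the support on which this positivity holds. Finally, since the full posterior factorizes as $\widehat{Pr}_m^n(\theta\mid O_\chi)\propto P(\theta)\prod_{i=1}^n\widehat{Pr}(h_i\mid\theta)$ over the finite history $O_\chi=\{h_1,\dots,h_n\}$, and each factor converges, the product of finitely many convergent factors converges to $\prod_{i=1}^n Pr_\chi(h_i\mid\theta)=Pr_\chi(O_\chi\mid\theta)$, which is the claimed limit.

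The step I expect to be hardest is matching the variance rate to the \emph{specific} bandwidth conditions stated, under \emph{distance-based} kernels $K(d_s(\cdot,\cdot)/w)$ and $K'(d_r(\cdot,\cdot)/w')$ rather than ordinary translation-invariant product kernels on $\mathbb{R}^d$: the usual change-of-variables and moment computations presume Euclidean structure, so one must argue that $d_s$ and $d_r$ induce a local chart (or are locally bi-Lipschitz to a metric with the right volume growth) in which the standard rates survive, and verify that the resulting variance really is controlled by $m\,w_m^{d/2}\to\infty$ rather than the more familiar $m\,w_m^{d}\to\infty$. A secondary difficulty is the ratio step in low-density regions of $\theta$, where a small denominator estimate can inflate the error; a rigorous treatment needs a uniform lower bound on $Pr_\chi(\theta)$ over the relevant support or a truncation argument, and this is precisely where I would invoke the precise hypotheses of the cited result \cite{mandyam2023kernel}.
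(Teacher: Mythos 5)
Your proposal follows essentially the same route as the paper's own proof: decompose the conditional estimator into a joint kernel density estimate over $(h,\theta)$ and a marginal estimate over $\theta$, establish convergence of each via classical bias--variance (MSE) kernel-density arguments under the stated bandwidth conditions, and recombine with the continuous mapping theorem applied to the ratio. The paper delegates the numerator/denominator convergence to cited KDE references where you spell out the Taylor-expansion bias and variance rates, and the caveats you flag (distance-based rather than Euclidean kernels, and positivity of the denominator) are genuine points the paper's proof passes over silently, but the skeleton of the argument is identical.
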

The above Lemma~\ref{lem:1} demonstrates the validity of CKDE, which means that with enough samples, our estimate of the posterior distribution becomes infinitely close to the true posterior distribution.
With this Lemma, we next present a Theorem explaining that when both $n$ and $m$ are sufficiently large, the posterior distribution can converge to the equivalence class of the true distribution.

\begin{assumption}
\label{ass:4}
Assume the prior for $\theta$, satisfies $Pr(\{\theta:\text{KL}(\theta^{*},\theta)<\epsilon\})>0$ for any $\epsilon > 0$, where KL is the Kullback–Leibler divergence.
\end{assumption}
\begin{theorem}
\label{the:bayesian_converge}
If assumption~\ref{ass:4} holds, and the posterior measure corresponding to the posterior density function $\widehat{Pr}_{m}^{n}$ according to Eqn.~\ref{eqn:bayesion}, denoted by $Pr_{m}^{n}$, is consistent w.r.t. the $L_1$ distance; that is,
\begin{equation}
    \lim\limits_{m\rightarrow\infty\atop n\rightarrow\infty}Pr_{m}^{n}(\{\theta:||Pr(\cdot|\theta)-Pr(\cdot|\theta^{*})||_{L_{1}}<\epsilon\}) = 1
\end{equation}
\end{theorem}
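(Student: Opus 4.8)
The plan is to recognize this as a Bayesian posterior consistency statement of the Schwartz/Doob type and to decouple the two limits. First I would use Lemma~\ref{lem:1} to dispense with the sampling limit: for each fixed $n$, as $m\to\infty$ the estimated likelihoods $\widehat{Pr}(h_i|\theta)$ converge to the true likelihoods $Pr(h_i|\theta)$, so the estimated posterior density $\widehat{Pr}_m^n$ of Eqn.~\ref{eqn:bayesion} converges to the \emph{exact} Bayesian posterior built from the true likelihoods. This reduces the claim to showing that the exact posterior $Pr^n(\theta\mid O_\chi)\propto P(\theta)\prod_{i=1}^n Pr(h_i\mid\theta)$ concentrates on the $L_1$-neighborhood $A_\epsilon=\{\theta:\|Pr(\cdot|\theta)-Pr(\cdot|\theta^*)\|_{L_1}<\epsilon\}$ as $n\to\infty$.

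For the $n\to\infty$ step I would write the posterior mass of the complement as the likelihood-ratio quotient
\[Pr^n(A_\epsilon^c\mid O_\chi)=\frac{\int_{A_\epsilon^c}\prod_{i=1}^n \frac{Pr(h_i|\theta)}{Pr(h_i|\theta^*)}\,dP(\theta)}{\int_{\Theta}\prod_{i=1}^n \frac{Pr(h_i|\theta)}{Pr(h_i|\theta^*)}\,dP(\theta)},\]
and bound numerator and denominator separately. For the denominator, the strong law of large numbers gives $\frac1n\sum_{i=1}^n\log\frac{Pr(h_i|\theta)}{Pr(h_i|\theta^*)}\to-\text{KL}(\theta^*,\theta)$; restricting to $\{\theta:\text{KL}(\theta^*,\theta)<\beta\}$, which by Assumption~\ref{ass:4} carries positive prior mass, forces the denominator to exceed $e^{-2\beta n}$ for all large $n$. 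For the numerator, the goal is an exponential decay $e^{-cn}$, which follows from the existence of a test sequence separating $\theta^*$ from the $L_1$-alternative $A_\epsilon^c$ with exponentially small type-I and type-II errors (a Le Cam / Hoeffding-style construction for total-variation-separated hypotheses). Combining the two bounds gives $Pr^n(A_\epsilon^c\mid O_\chi)\le e^{-(c-2\beta)n}\to 0$ once $\beta<c/2$, so the posterior mass of $A_\epsilon^c$ vanishes and the stated limit equals one.

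The main obstacle is the numerator bound: constructing the exponentially consistent tests for hypotheses separated in the $L_1$ (total-variation) metric, since, unlike a KL separation, an $L_1$ separation does not directly supply an exponential rate and one must patch local tests into a single uniformly consistent test over $A_\epsilon^c$ via a covering/Le Cam argument. A secondary subtlety is the legitimacy of decoupling the limits: I would either justify the iterated limit ($m\to\infty$ first, then $n\to\infty$) as above, or, for a genuinely joint limit, strengthen Lemma~\ref{lem:1} to uniform-in-$\theta$ control of the kernel error so that the likelihood-ratio bounds survive replacing $Pr(h_i|\theta)$ by $\widehat{Pr}(h_i|\theta)$ up to a vanishing multiplicative factor. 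Assumption~\ref{ass:4} is exactly what powers the denominator bound, while the square-integrability and smoothness assumptions are what make Lemma~\ref{lem:1} applicable.
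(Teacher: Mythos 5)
Your proposal is correct in substance and ultimately rests on the same pillar as the paper's proof -- Schwartz-type posterior consistency driven by Assumption~\ref{ass:4} (KL support) plus the existence of consistent tests -- but you execute it along a genuinely different route. The paper does not unroll Schwartz's argument at all: it first uses Assumption~\ref{ass:equal} to pass to the quotient space $\widetilde{\Theta}=\Theta/\simeq$ of likelihood-equivalence classes (so that the model becomes identifiable and ``consistency'' is well-posed), checks that the induced prior on the quotient inherits the KL support condition, asserts that compactness of $\Theta$ supplies the required uniformly consistent tests, and then cites Schwartz's theorem as a black box. You instead work directly with $L_1$-neighborhoods of the densities -- which makes the quotient construction unnecessary, since $A_\epsilon$ is automatically a union of equivalence classes -- and re-derive Schwartz's theorem from scratch: the likelihood-ratio quotient, the $e^{-2\beta n}$ denominator bound via the SLLN and Fatou on the KL ball, and the exponentially decaying numerator via Le~Cam/Hoeffding tests. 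What your route buys is transparency about the two points the paper glosses over: (i) the decoupling of the $m\to\infty$ and $n\to\infty$ limits, where you correctly note that an iterated limit follows from Lemma~\ref{lem:1} but a joint limit would need uniform-in-$\theta$ control of the kernel error; and (ii) the fact that the numerator bound is the real technical burden. On point (ii), note that your covering/Le~Cam patching of local tests into a uniformly consistent test over $A_\epsilon^c$ requires compactness (or finite metric entropy) of the alternative -- this is precisely the compactness of $\Theta$ that the paper invokes, so your ``main obstacle'' is resolved by the same hypothesis the paper uses, and your argument is complete once you state it explicitly.
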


The aforementioned Theorem~\ref{the:bayesian_converge} indicates that, with enough samples, the formula can converge to the equivalence class of true type distribution $\theta^{*}$.

\subsection{Bayesian-CFR Minimization}
For extensive-form games, one of the most popular algorithms for computing NE is based on the CFR algorithm, which calculates strategies based on cumulative regret values through regret matching. In other words, the probability of each action is related to the cumulative regret value of that action. Therefore, to compute each agent's strategy, we need to know the cumulative regret values of their actions explicitly. In Bayesian games, we redefine regret:
The \textbf{Bayesian average overall regret} of player $i$ given current belief $Pr(\theta|O_{\chi})$ and action $a$ in $I$ on iteration $T$ is:
\begin{equation}
  R_{i,\Theta}^{T} = \frac{1}{T} \max\limits_{\sigma_{i,\theta}^{*}\in\sum_{i}}\sum\limits_{\theta\in \Theta}Pr(\theta|O_{\chi})\sum\limits_{t=1}^{T}(u_{i,\theta}(\sigma_{i,\theta}^{*},\sigma_{-i,\theta}^{t})-u_{i,\theta}(\sigma_{\theta}^{t})).  
\end{equation}
The posterior probability in the aforementioned formula often cannot be obtained on time. Moreover, calculating regret after obtaining the posterior probability usually results in lower efficiency.
Therefore, the aforementioned Theorem~\ref{the:bayesian_converge} can be expressed in the following equivalent form:
\begin{equation}
  R_{i,\Theta}^{T} = \frac{1}{T} \max\limits_{\sigma_{i,\theta}^{*}\in\sum_{i}}\sum\limits_{t=1}^{T}\sum\limits_{\theta\in \Theta}Pr(\theta|O_{\chi}^{t})(u_{i,\theta}(\sigma_{i,\theta}^{*},\sigma_{-i,\theta}^{t})-u_{i,\theta}(\sigma_{\theta}^{t})).  
\end{equation}
{In this scenario, the posterior distribution is conditioned on the latest data up to time $t$, allowing for the consideration of uncertainty over time in the computation of Bayesian regret. }

The basic idea of our method is to decompose the overall regret into a set of additional regret terms that can be independently minimized. We have extended counterfactual regret to Bayesian counterfactual regret, defined on individual information sets and across each distribution of types. We demonstrate that the sum of Bayesian counterfactual regrets constrains the overall Bayesian regret, allowing for independently minimizing the Bayesian counterfactual regret for each type at every information set.

The \textbf{Bayesian immediate counterfactual regret} is:
\begin{equation}
\label{eqn:regret_imm}
  R_{i,\Theta,imm}^{T}(I) = \frac{1}{T} \max\limits_{a\in A(I)}\sum\limits_{t=1}^{T}\sum\limits_{\theta\in \Theta}Pr(\theta|O^{t}_{\chi})\pi_{-i,\theta}^{\sigma^{t}}(I)(u_{i,\theta}(\sigma_{\theta}^{t}|_{I\rightarrow a})-u_{i,\theta}(\sigma_{\theta}^{t},I)).  
\end{equation}
The above expression represents the regret value of a player making decisions at the information set $I$ in terms of Bayesian counterfactual utility, given knowledge of the player's distribution.
Because negative regret often indicates the selection of better action, we typically focus on the regret produced by poor choices made by the current strategy. Therefore, let $R_{i,\Theta,imm}^{T,+}(I) = \max(R_{i,\Theta,imm}^{T}(I),0)$ be the positive portion of immediate counterfactual regret.

\begin{theorem}
\label{the:regret}
$R_{i,\Theta}^{T}\leq \sum_{I\in \mathcal{I}_{i}}R_{i,\Theta,imm}^{T,+}(I)$ holds.
\end{theorem}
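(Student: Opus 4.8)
The plan is to adapt the recursive information-set decomposition of classical CFR (Zinkevich et al.) to the belief-weighted setting, treating the posterior-weighted sum over types as a single generalized counterfactual value and never splitting the argument across types (so that the shared maximum over actions in $R_{i,\Theta,imm}^{T}(I)$ is respected). First I would introduce the \emph{Bayesian full counterfactual regret}
$$R_{i,\Theta,\mathrm{full}}^{T}(I) = \frac{1}{T}\max_{\sigma'}\sum_{t=1}^{T}\sum_{\theta\in\Theta}Pr(\theta|O_{\chi}^{t})\,\pi_{-i,\theta}^{\sigma^{t}}(I)\left(u_{i,\theta}(\sigma_{\theta}^{t}|_{I\to\sigma'},I)-u_{i,\theta}(\sigma_{\theta}^{t},I)\right),$$
where the deviation $\sigma'$ is a single strategy, shared across all types, applied at $I$ and at every information set of player $i$ reachable from $I$. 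This is the exact analogue of $R_{i,\Theta,imm}^{T}(I)$ in Eqn.~\ref{eqn:regret_imm}, except that the best single action $a$ at $I$ is replaced by the best continuation strategy from $I$ onward.

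The core of the argument is the recursive bound
$$R_{i,\Theta,\mathrm{full}}^{T}(I)\le R_{i,\Theta,imm}^{T,+}(I)+\sum_{I'\in\mathrm{Succ}_{i}(I)}R_{i,\Theta,\mathrm{full}}^{T,+}(I'),$$
where $\mathrm{Succ}_{i}(I)$ denotes the information sets of player $i$ that immediately follow $I$. To prove it I would fix $t$ and $\theta$ and use the Bellman-style decomposition of the deviation value: writing $\sigma'=(\sigma'_{I},\sigma'_{\mathrm{cont}})$, the quantity $u_{i,\theta}(\sigma_{\theta}^{t}|_{I\to\sigma'},I)$ is linear in the action distribution $\sigma'_{I}$, so it splits into an immediate term at $I$ (playing $\sigma'_{I}$ and then $\sigma^{t}$) plus a continuation term at the successor information sets, the latter carrying an extra player-$i$ reach factor that is bounded by $1$. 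This identity holds term-by-term in $(t,\theta)$, and since the weights $Pr(\theta|O_{\chi}^{t})\,\pi_{-i,\theta}^{\sigma^{t}}(I)$ are nonnegative, summing against them preserves it. Maximizing over the shared $\sigma'$ then produces the two right-hand terms: the immediate gain is maximized by a single action $a$, giving $R_{i,\Theta,imm}^{T}(I)$, while the continuation choices across distinct successor information sets are independent, so their joint maximum equals the sum of per-successor maxima; inserting positive parts via $x\le\max(x,0)$ yields the stated bound.

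Finally I would close the argument by relating the overall regret to the full regret at the top of player $i$'s decision tree: deviating globally to a single $\sigma^{*}$ is a special case of the full deviation, and the induced benefit decomposes over $i$'s initial information sets, so after inserting positive parts $R_{i,\Theta}^{T}\le\sum_{I}R_{i,\Theta,\mathrm{full}}^{T,+}(I)$ with the sum ranging over those initial information sets. Unrolling the recursion down the information-set tree then makes each $I\in\mathcal{I}_{i}$ contribute its immediate positive regret exactly once, giving $R_{i,\Theta}^{T}\le\sum_{I\in\mathcal{I}_{i}}R_{i,\Theta,imm}^{T,+}(I)$.

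I expect the main obstacle to be the recursive step while keeping the type-sum intact. Unlike classical CFR, the opponent reach probabilities $\pi_{-i,\theta}^{\sigma^{t}}(I)$ are themselves type-dependent and sit \emph{inside} the sum over $\theta$, so the effective weight attached to each information set varies with both $\theta$ and $t$; I must therefore carry out the value decomposition at fixed $(t,\theta)$ \emph{before} aggregating, and check that the positive-part operation commutes correctly with the shared maximum over deviations and with the independent continuation maxima over successors. The time-varying posterior weights $Pr(\theta|O_{\chi}^{t})$ are a secondary nuisance, but because every manipulation is performed at fixed $t$ and the sum over $t$ is taken only at the very end, they act as harmless nonnegative constants within each term.
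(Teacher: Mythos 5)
Your proposal is correct and follows essentially the same route as the paper's own proof: both introduce a Bayesian full counterfactual regret at each information set, prove the recursive bound (full $\le$ immediate $+$ sum of successor fulls, with positive parts inserted), unroll it by induction down the information-set tree using perfect recall for disjointness of successor subtrees, and finish by identifying the overall Bayesian regret with the full regret at the root (adding a dummy root information set when $P(\emptyset)\neq i$). Your explicit handling of the type-sum — performing the value decomposition at fixed $(t,\theta)$ and keeping the deviation shared across types — is a subtlety the paper glosses over notationally, but it does not constitute a different argument.
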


The theorem~\ref{the:regret} shows that minimizing immediate Bayesian counterfactual regret can also minimize overall Bayesian regret. Therefore, if we can only minimize immediate Bayesian counterfactual regret, we can find an approximate Nash equilibrium.

Like CFR, the Bayesian immediate counterfactual regret can also be minimized independently at each information set using approximation algorithms. We can obtain:
\begin{equation}
\label{eqn:regret_I}
  R_{i,\Theta}^{T}(I,a) = \frac{1}{T} \sum\limits_{t=1}^{T}\sum\limits_{\theta\in \Theta}Pr_{i}(\theta|O^{t}_{\chi})\pi_{-i,\theta}^{\sigma^{t}}(I)(u_{i,\theta}(\sigma_{\theta}^{t}|_{I\rightarrow a})-u_{i,\theta}(\sigma_{\theta}^{t},I)).  
\end{equation}

Therefore, we can propose the following new strategy update method:
\begin{equation}
\label{eqn:bayesian_strategy}
\sigma_{\Theta}^{T+1}(I,a) = 
\left\{
\begin{aligned}
    &\frac{R_{+,\Theta}^{T}(I,a)}{\sum_{b\in A(I)}R_{+,\Theta}^{T}(I,b)} & if \sum_{b\in A(I)} R_{+,\Theta}^{T}(I,b) > 0\\
    &\frac{1}{\lvert A(I) \rvert} & otherwise
    \end{aligned}
\right.
\end{equation}
{where $R_{+,\Theta}^{T}(I,a) = \max(R_{i,\Theta}^{T}(I,a),0)$.} To demonstrate the correctness of the aforementioned update strategy, we present the following theorem. It states that within a finite time, the total regret can always converge to an appropriate bound:

\begin{theorem}
\label{the:bound1}
If player $i$ selects actions according to Eqn.~\ref{eqn:bayesian_strategy} then $R_{i,\Theta,imm}^{T}(I)\leq \Delta_{u,i,\Theta}^{T}\sqrt{|A_i|}/\sqrt{T}$ and consequently $R_{i,\Theta}^{T}\leq\Delta_{u,i,\Theta}^{T}|\mathcal{I}_{i}|\sqrt{|A_i|}/\sqrt{T}$, where $|A_{i}|=\max_{h:P(h)=i}|A(h)|$, and $\Delta_{u,i,\Theta}=\sum_{\theta\in\Theta}Pr(\theta|O_{\chi}^{T})\max_{a,a^{'}\in A}(u_{T,\theta}(a)-u_{T,\theta}(a^{'}))$
\end{theorem}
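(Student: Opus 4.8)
The plan is to reduce the statement to a per-information-set online-learning bound and then lift it to the overall regret via Theorem~\ref{the:regret}. Fix an information set $I$ with $P(I)=i$ and write the per-round Bayesian counterfactual regret of action $a$ as
\[
\rho^{t}(I,a)=\sum_{\theta\in\Theta}Pr(\theta|O^{t}_{\chi})\,\pi^{\sigma^{t}}_{-i,\theta}(I)\bigl(u_{i,\theta}(\sigma^{t}_{\theta}|_{I\to a})-u_{i,\theta}(\sigma^{t}_{\theta},I)\bigr),
\]
so that $T\cdot R^{T}_{i,\Theta}(I,a)=\sum_{t=1}^{T}\rho^{t}(I,a)$ and $R^{T}_{i,\Theta,imm}(I)=\tfrac{1}{T}\max_{a}\sum_{t=1}^{T}\rho^{t}(I,a)$ by Eqn.~\ref{eqn:regret_imm}. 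Since the $1/T$ factor is common to all actions, the regret-matching rule in Eqn.~\ref{eqn:bayesian_strategy} is exactly regret matching on the cumulative vector $\sum_{t}\rho^{t}(I,\cdot)$; I would therefore follow the Blackwell-style potential argument used for the classical CFR bound, tracking the potential $\Phi^{T}(I)=\sum_{a\in A(I)}\bigl(\max(\sum_{t\le T}\rho^{t}(I,a),0)\bigr)^{2}$.

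The central step is the Blackwell orthogonality identity: the expected instantaneous regret under the current strategy vanishes. For each fixed type $\theta$ the counterfactual value is linear in the action distribution at $I$, i.e. $\sum_{a}\sigma^{t}(I,a)\,u_{i,\theta}(\sigma^{t}_{\theta}|_{I\to a})=u_{i,\theta}(\sigma^{t}_{\theta},I)$, so $\sum_{a}\sigma^{t}(I,a)\rho^{t}(I,a)=0$; because $\sigma^{t}(I,\cdot)$ is proportional to the positive cumulative regrets, this gives $\sum_{a}\max(\sum_{s<t}\rho^{s}(I,a),0)\,\rho^{t}(I,a)=0$ (the degenerate all-nonpositive case being trivial). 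Combined with the elementary inequality $(\max(x+y,0))^{2}\le(\max(x,0)+y)^{2}$, this yields the one-step recursion $\Phi^{t}(I)\le\Phi^{t-1}(I)+\sum_{a}(\rho^{t}(I,a))^{2}$. I would then bound each $|\rho^{t}(I,a)|$ by $\Delta_{u,i,\Theta}^{t}=\sum_{\theta}Pr(\theta|O^{t}_{\chi})\max_{a,a'}(u_{t,\theta}(a)-u_{t,\theta}(a'))$, using $\pi^{\sigma^{t}}_{-i,\theta}(I)\le 1$ and that the bracketed term is a difference of counterfactual utilities.

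Telescoping the recursion gives $\Phi^{T}(I)\le|A(I)|\sum_{t=1}^{T}(\Delta_{u,i,\Theta}^{t})^{2}\le|A_{i}|\,T\,(\Delta_{u,i,\Theta}^{T})^{2}$, where the last step replaces the per-round bound by its uniform value and $|A_{i}|=\max_{h:P(h)=i}|A(h)|$. Since $\max_{a}\max(\sum_{t}\rho^{t}(I,a),0)\le\sqrt{\Phi^{T}(I)}$, dividing by $T$ as in the definition produces $R^{T}_{i,\Theta,imm}(I)\le\Delta_{u,i,\Theta}^{T}\sqrt{|A_{i}|}/\sqrt{T}$; summing over $I\in\mathcal{I}_{i}$ and invoking Theorem~\ref{the:regret} gives the overall bound $R^{T}_{i,\Theta}\le\Delta_{u,i,\Theta}^{T}|\mathcal{I}_{i}|\sqrt{|A_{i}|}/\sqrt{T}$. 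The main obstacle I anticipate is handling the time-dependence cleanly: the bound $\Delta_{u,i,\Theta}^{t}$ varies with the evolving posterior $Pr(\theta|O^{t}_{\chi})$, so writing the final constant as the single quantity $\Delta_{u,i,\Theta}^{T}$ implicitly requires either a uniform-over-time bound $\max_{t\le T}\Delta_{u,i,\Theta}^{t}$ or an argument that the posterior-weighted utility range stabilizes (consistent with Theorem~\ref{the:bayesian_converge}); I would make this dependence explicit rather than absorb it silently. A secondary point to verify carefully is that the belief weights $Pr(\theta|O^{t}_{\chi})$ and reach probabilities factor out of the orthogonality identity, which they do because $\sigma^{t}(I,\cdot)$ is a single type-independent distribution at $I$.
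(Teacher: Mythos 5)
Your proof is correct, and it takes a genuinely different route from the paper's. The paper proves this theorem by decomposing the Bayesian regret into per-type regrets: it invokes Theorem~\ref{the:bayesian_converge} to assert $\sum_{\theta}Pr(\theta|O_{\chi}^{T})R_{\theta}^{T}\simeq R_{\Theta}^{T}$, applies the classical Zinkevich et al.\ regret-matching bound to each fixed type $\theta$ separately, and then re-weights by the posterior to obtain Lemma~\ref{lem:bound_R} and the theorem. You instead run the Blackwell potential argument directly on the posterior-weighted instantaneous regrets $\rho^{t}(I,a)$, observing that Eqn.~\ref{eqn:bayesian_strategy} is literally regret matching on that aggregated vector and that per-type linearity of counterfactual values makes the aggregate satisfy the orthogonality identity $\sum_{a}\sigma^{t}(I,a)\rho^{t}(I,a)=0$. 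Your route buys finite-time exactness and self-containment: it needs neither the belief-consistency theorem nor per-type regret-matching guarantees (the algorithm never does per-type regret matching, so the paper's per-type bounds apply only to a hypothetical procedure, and its ``$\simeq$'' step is an asymptotic approximation presented as an inequality). The paper's route buys brevity and an explicit link to the classical per-type bound, at the cost of rigor for finite $T$. Both arguments share the lift from immediate to overall regret via Theorem~\ref{the:regret}, and both face the issue you flag: since the posterior evolves, the per-round range $\Delta_{u,i,\Theta}^{t}$ is not dominated by its time-$T$ value, so the clean constant should really be $\max_{t\le T}\Delta_{u,i,\Theta}^{t}$ (this is in fact how the paper's appendix defines $\Delta_{u,\Theta}$ in Lemma~\ref{lem:bound_R}, inconsistently with the theorem statement); your decision to make that dependence explicit is the right one.
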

Theorem~\ref{the:bound1} states that the described strategy can be used to compute Nash equilibria for self-play. The average overall regret bound is influenced by the player's probability distribution. However, as the algorithm progresses, the estimated probability distribution of the player becomes more accurate, and this bound will gradually tighten.

The pseudo-code for the algorithm can be seen in the Appendix.

\section{Extend Bayesian Counterfactual Regret Minimization to other algorithms}

This proposed Bayesian-CFR algorithm can be easily generalized and incorporated into other CFR-based algorithms. We propose Bayesian CFR+ and Deep Bayesian-CFR. The regret bound for Deep Bayesian-CFR is quantified.
The pseudo-code of the algorithms is detailed in the Appendix.

\subsection{Bayesian CFR+}

CFR+ is typically an order of magnitude or more higher in computation time compared to previously known algorithms, while potentially requiring less memory. CFR+ is a vector form of an alternating update algorithm. CFR+ replaces the regret-matching algorithm used in CFR with a new algorithm, regret-matching+.
When extended to CFR+~\cite{tammelin2014solving}, it results in the \textbf{Bayesian cumulative counterfactual regret+}.
\begin{equation}
\label{eqn:cfr_plus}
\begin{aligned}
R_{i,\Theta}^{+,T}(I,a) = 
\left\{
\begin{aligned}
    &\max\{\sum_{\theta\in\Theta}Pr(\theta|O_{\chi}^{T})(u_{i,\theta}(\sigma_{I\rightarrow a}^{T},I)-u_{i,\theta}(\sigma_{\Theta}^{T,I})),0\}& T = 1\\
    &\max\{R_{i,\Theta}^{+,T-1}(I,a)+\sum_{\theta\in\Theta}Pr(\theta|O_{\chi}^{T})(u_{i,\theta}(\sigma_{I\rightarrow a}^{T},I)-u_{i,\theta}(\sigma_{\Theta}^{T,I})),0\} & otherwise
    \end{aligned}
\right.
\end{aligned}
\end{equation}
In CFR+, faster convergence to Nash equilibrium can be achieved. The effectiveness of the Bayesian module, demonstrating its differences from traditional algorithms, is more evident in the experiments discussed later in the text. Additionally, it exhibits less variability.

\subsection{Bayesian Deep CFR}
Similarly, Bayesian Counterfactual Regret can also be extended to deep neural network representations, similar to Deep CFR~\cite{brown2019deep}.
We use the following network structure: each player's advantage value network and the network approximates the final average strategy. The difference with Deep CFR is that we encode the player types using one-hot encoding as a separate module in our players' advantage value networks. This module is then merged with the second layer of the original advantage value network, forming an advantage network with type-specific features.
Below, we obtain an upper bound of regret obtained using Bayesian Counterfactual Regret in a Bayesian game. Let $T$ denote the number of BCFR iterations, $|A|$ the maximum number of actions at any infoset, and $K$ the number of traversals per iteration. Let $L_{\mathcal{R}}^{t}$ be the average MSE loss for $\mathcal{R}_{p}(I,a|\theta^{t})$ on a sample in $M_{r,p}$ at iteration $t$, and let $L_{\mathcal{R}^{*}}^{t}$ be the minimum loss achievable for any function $\mathcal{R}$. The result is stated in the following theorem.
\begin{theorem}
\label{the:deep_bound}
Let $L_{\mathcal{R}}^{t} - L_{\mathcal{R}^{*}}^{t} \leq \epsilon_{L}$.
If the value memories are sufficiently large and Eq.~\ref{eqn:regret_I} holds, then with probability $1-\rho$ total regret of player $p$ at time $T$ is bounded by
\begin{equation}
    R_{p,\Theta}^{T}\leq \Bigg(1+\frac{\sqrt{2}}{\sqrt{\rho K}} \Delta_{\Theta}^{T}|I_p|\sqrt{|A|}\sqrt{T} + 4T|I_{p}|\sqrt{|A|\Delta_{\Theta}^{T} \epsilon_{L}}\Bigg)
\end{equation}
As $T \rightarrow \infty$, the average regret $\frac{R_{p,\Theta}^{T}}{T}$ is bounded by $4|\mathcal{I}_{p}|\sqrt{|A|\Delta^{T}_{\Theta} \epsilon_{L}}$ with high probability, where $\Delta_{\Theta} = \sum_{\theta\in\Theta}Pr(\theta|O^{T}_{\chi})\Delta_{\theta}$.
\end{theorem}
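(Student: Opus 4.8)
The plan is to follow the structure of the Deep CFR regret analysis~\cite{brown2019deep}, adapting each step so that the belief weighting is carried through. I would decompose the total regret $R_{p,\Theta}^{T}$ into three contributions: (i) the idealized CFR regret obtained when the Bayesian counterfactual values are computed exactly, (ii) the error introduced by the $K$ Monte-Carlo (external-sampling) traversals of the game tree per iteration, and (iii) the error introduced by replacing the exact cumulative regrets with the neural-network estimate $\mathcal{R}_p(I,a|\theta^{t})$. Each is bounded separately and the three are then combined.

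First I would invoke Theorem~\ref{the:bound1}: when the Bayesian immediate counterfactual regrets of Eqn.~\ref{eqn:regret_I} are minimized exactly by regret matching, the per-player regret obeys a bound of the form $\Delta_{\Theta}^{T}|\mathcal{I}_p|\sqrt{|A|}\sqrt{T}$, where the belief weighting replaces the single-player utility range by $\Delta_{\Theta}^{T}=\sum_{\theta\in\Theta}Pr(\theta|O_\chi^{T})\Delta_{\theta}$. This supplies the leading factor. Next, because the counterfactual values at each infoset are in fact estimated by $K$ external-sampling traversals, the sampled Bayesian regret is an unbiased estimator of Eqn.~\ref{eqn:regret_I} whose variance is controlled; averaging over $K$ traversals shrinks that variance by a factor $K$, and a Chebyshev/Markov-type concentration argument (as in~\cite{lanctot2009monte,brown2019deep}) then shows that, with probability $1-\rho$, the sampled cumulative regret exceeds the true one by at most the multiplicative factor $\frac{\sqrt{2}}{\sqrt{\rho K}}$. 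This produces the correction $\left(1+\frac{\sqrt{2}}{\sqrt{\rho K}}\right)$ on the leading term.

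For the approximation step, the network is trained to minimize the MSE between its output and the sampled cumulative regret, achieving excess loss $L_{\mathcal{R}}^{t}-L_{\mathcal{R}^{*}}^{t}\le\epsilon_{L}$. I would convert this average squared error over the memory $M_{r,p}$ into a per-infoset deviation of the regrets that are fed into regret matching, and then apply the bound on regret matching with \emph{approximate} regrets. Since each such perturbation contributes at most an amount of order $\sqrt{|A|\Delta_{\Theta}^{T}\epsilon_{L}}$ per infoset per iteration, summing over the $T$ iterations and the $|\mathcal{I}_p|$ infosets yields the additive term $4T|\mathcal{I}_p|\sqrt{|A|\Delta_{\Theta}^{T}\epsilon_{L}}$. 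Adding contributions (i)--(iii) gives the stated total-regret bound; dividing by $T$ then sends the $O(\sqrt{T})$ first term to zero in the average while leaving $4|\mathcal{I}_p|\sqrt{|A|\Delta_{\Theta}^{T}\epsilon_{L}}$ as $T\to\infty$.

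The hard part will be coupling the two sources of randomness. Unlike ordinary Deep CFR, the weights $Pr(\theta|O_\chi^{t})$ are themselves random, data-dependent quantities that evolve across iterations through the kernel-density posterior. I must verify that the unbiasedness and variance bound used in step (ii) survive the belief weighting, i.e.\ that conditioning the sampled Bayesian regret on the current posterior preserves the martingale structure needed for the concentration inequality, and that the posterior convergence guarantees (Lemma~\ref{lem:1} and Theorem~\ref{the:bayesian_converge}) keep $\Delta_{\Theta}^{T}$ well-defined and stable enough to factor out of the per-iteration estimates. Managing this interaction between the belief dynamics and the sampling noise, together with the $L_2$-to-regret conversion in step (iii), is where the genuine difficulty lies relative to the non-Bayesian result.
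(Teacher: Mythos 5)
Your high-level plan (exact-regret term, a $K$-sampling concentration correction, and an MSE-to-regret conversion) matches the ingredients of the paper's proof, but your proposal defers the genuinely hard step and never resolves it: you carry the random, data-dependent weights $Pr(\theta|O_{\chi}^{t})$ \emph{inside} every stage of the Deep CFR analysis, and then concede in your final paragraph that you have not verified that unbiasedness, the variance bound, and the martingale structure needed for the Chebyshev-type concentration survive this weighting. That verification is not a routine check: the posterior weights at iteration $t$ depend on the entire interaction history, hence on the same sampling randomness the concentration argument is conditioning on, so the per-iteration sampled Bayesian regrets are no longer the i.i.d.-style unbiased estimates that the bounds of \cite{lanctot2009monte,brown2019deep} require. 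As written, the proposal's step (ii) does not go through, and this is precisely the point where the Bayesian setting differs from vanilla Deep CFR.

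The paper avoids this coupling problem entirely by reversing the order of aggregation. It first \emph{fixes} a type $\theta$ and runs the whole Deep CFR machinery conditionally on that type -- the perturbed regret-matching bound of \cite{morrill2016using}, the $K$-traversal concentration factor $\bigl(1+\tfrac{\sqrt{2}}{\sqrt{\rho K}}\bigr)$, and the bias--variance argument converting the excess MSE loss $\epsilon_{L}$ into per-infoset deviations -- so that within this analysis $\Delta_{\theta}$ is a deterministic constant and no belief weight appears anywhere. This yields, per type,
\begin{equation}
\bar{R}_{p,\theta}^{T} \leq \Bigl(1+\tfrac{\sqrt{2}}{\sqrt{\rho K}}\Bigr)\Delta_{\theta}|\mathcal{I}_{p}|\tfrac{\sqrt{|A|}}{\sqrt{T}} + 4|\mathcal{I}_{p}|\sqrt{|A|\Delta_{\theta}\epsilon_{L}}.
\end{equation}
Only then does it form the posterior-weighted combination $\sum_{\theta}Pr(\theta|O_{\chi}^{T})\bar{R}_{p,\theta}^{T}$, pull the weights inside the square root using $Pr^{2}(\theta|O_{\chi}^{T})\leq Pr(\theta|O_{\chi}^{T})$ to produce $\Delta_{\Theta}=\sum_{\theta}Pr(\theta|O_{\chi}^{T})\Delta_{\theta}$, and finally appeal to the posterior-consistency result (Theorem~\ref{the:bayesian_converge}) to identify this weighted sum with the Bayesian regret $R_{p,\Theta}^{T}$. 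If you restructure your argument the same way -- condition on $\theta$ first, aggregate over the posterior last -- the coupling issue you flagged disappears, because the belief dynamics then interact with the analysis only through the final (algebraic plus consistency) aggregation step rather than through the concentration inequality.
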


The proof is inspired by that of Deep CFR~\cite{brown2019deep}. The difference lies in that $\Delta$ is a variable related to the belief model $\Theta$ and time $T$. We can prove that as time $T$ extends, its bound will always converge, meaning that a {Bayesian} Nash equilibrium can always be found under the new regret representation. For the detailed proof, please see the appendix.

\section{Numerical Evaluation}
\label{evaluation}

To validate the effectiveness of Bayesian-CFR, we implement a number of baselines including CFR~\cite{zinkevich2007regret}, CFR+~\cite{tammelin2014solving}, MCCFR~\cite{lanctot2009monte}, Deep CFR~\cite{brown2019deep}, and DQN~\cite{mnih2013playing}, and compare these with Bayesian-CFR, Bayesian-CFR+, and deep Bayesian-CFR.
To mitigate the instability caused by the powerful fitting capabilities of neural networks, we use a simple deep neural network with limited parameters per layer.
Additionally, to assess the efficacy of the Bayesian-CFR technique, we conduct ablation studies by removing posterior belief updates and also consider an ideal game with complete information, serving  as the fundamental limit for Bayesian games. 

Performance can be measured in terms of exploitability, which is quantified in milli-big blinds per game (mbb/g), a standard metric for measuring win rates in Texas Hold'em.
The lower the exploitability, the closer it is to a Nash equilibrium (or Bayesian Nash equilibrium in this paper).
Experiments are performed on a server with an AMD EPYC 7513 32-Core Processor CPU and an NVIDIA RTX A6000 GPU.

\paragraph{Evaluation Environment: Texas hold'em.}
Our evaluations focus on Texas hold'em~\cite{southey2012bayes}, a game where each player receives two private cards and shares five community cards displayed in three stages—flop, turn, and river. The goal is to construct the best five-card hand possible. This setup allows for evaluating decision-making and strategic behavior, providing a controlled yet realistic setting for testing our algorithm against programmatically generated players.
We use the environment provided by ~\cite{zha2019rlcard}.

We consider four types of (latent) payoffs to create players with different behaviors, preference, and goals. 
This includes: {(i) Normal {payoff}} where the winners split the prize pool with some pre-determined ratio, and the losers lose all the stakes they have contributed;
{(ii) Conservative payoff}
where the winners receive a fixed reward, while the loser receives the negative of the reward; and
{(iii) aggressive payoff}
where winners receive a higher reward when the prize pool is larger and a lower reward when the prize pool is smaller, while the losers receive the opposite.
These latent payoffs are assigned to different players at the beginning of each game and represent some latent behavior, preference, and goals (denoted as a type), which can only be inferred by other players through observation/interaction history during the game.

In the following experiments, we categorize players into two classes over the different payoff types: pure-type players that have a single type (e.g., Pure-N, Pure-C, and Pure-A for normal, conservative, and aggressive payoffs) and mixed-type players whose type is some probability distribution. We use a total of nine probability distributions, denoted as Mixed-x, in the evaluation results.  
The specific distribution settings can be found in the appendix.

\begin{figure*}[h]
\centering
  \subfigure[Pure-N]{\includegraphics[width=0.33\textwidth]{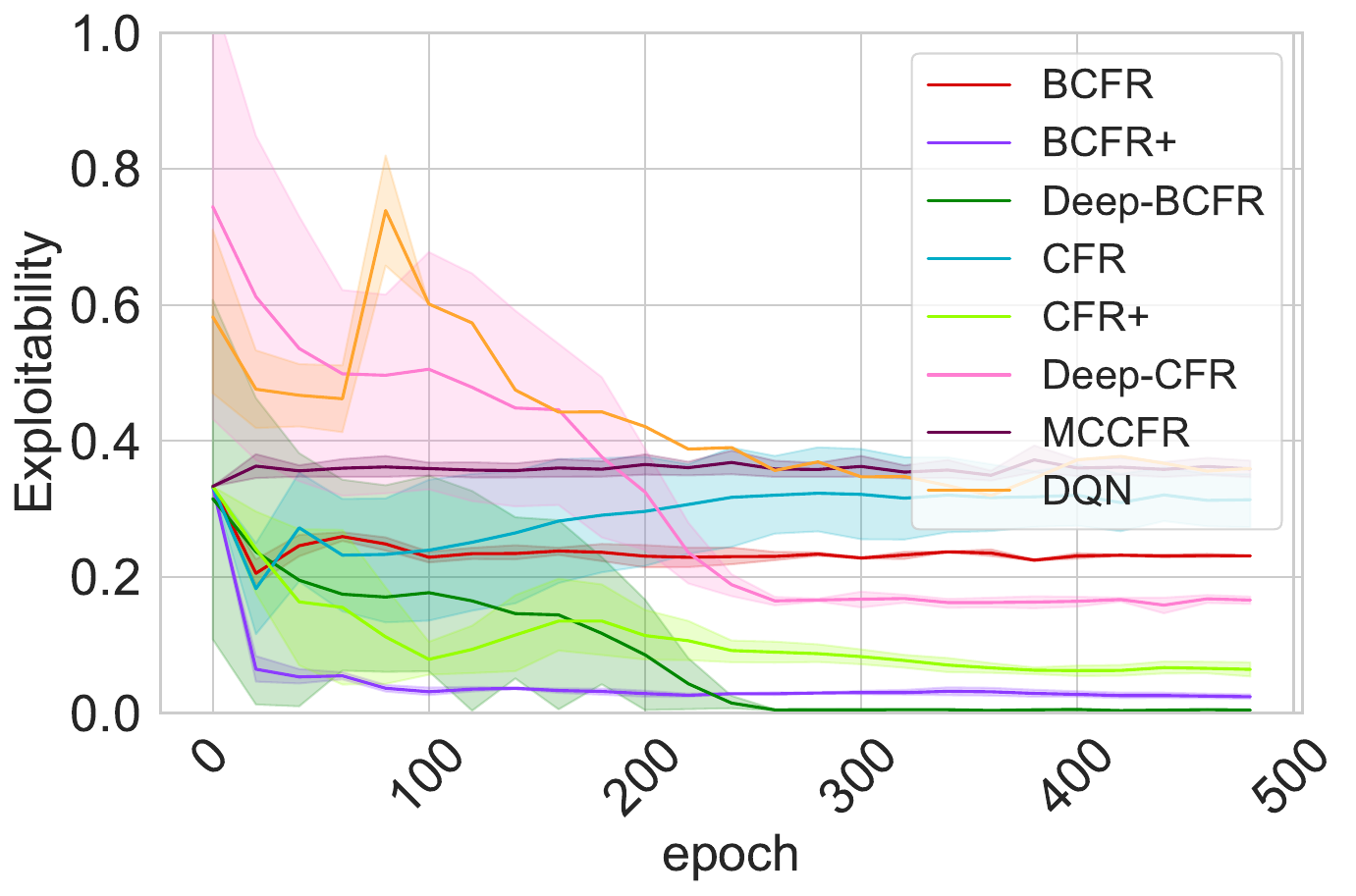}}\hfill
  \subfigure[Pure-C]{\includegraphics[width=0.33\textwidth]{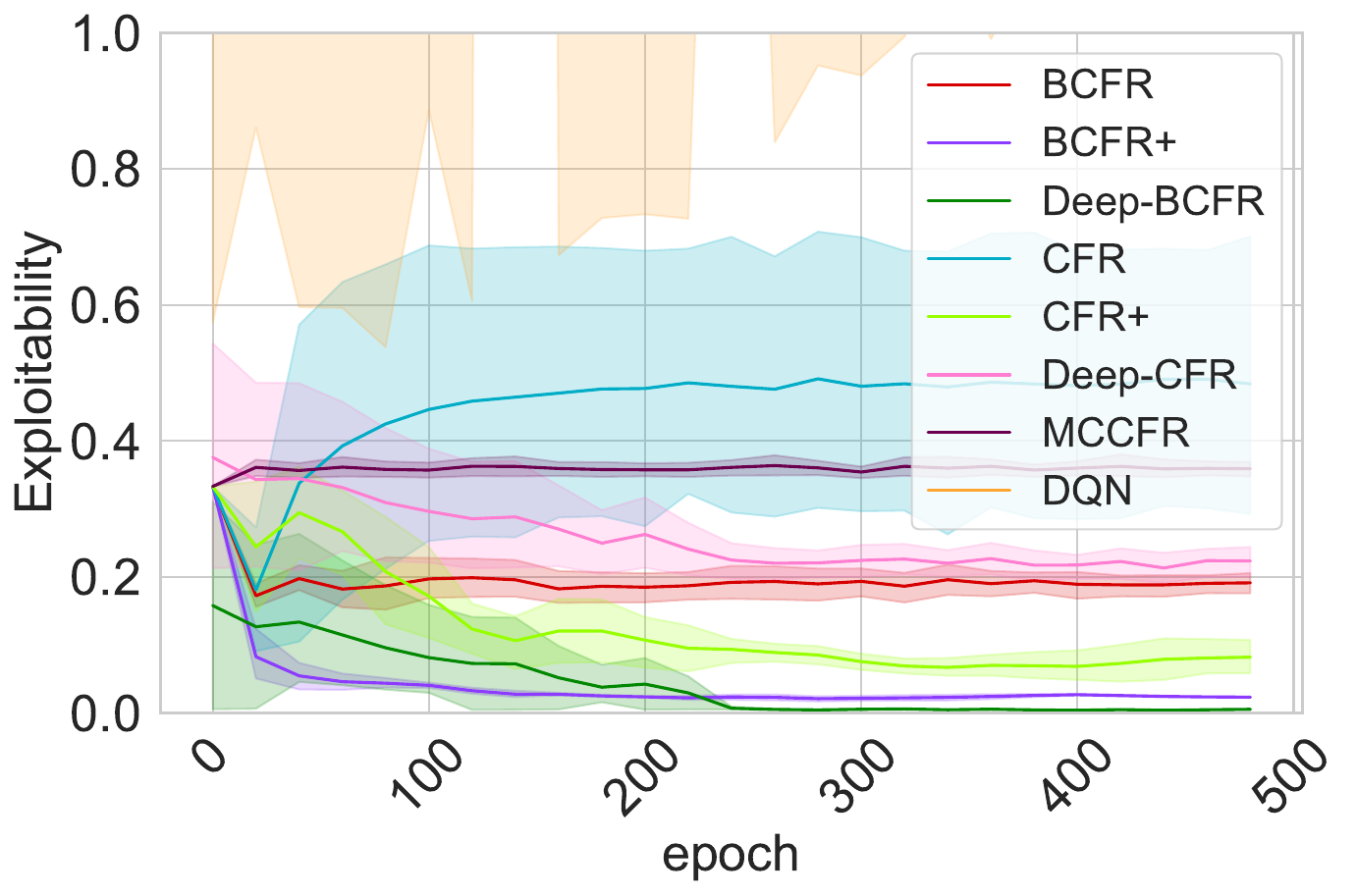}}\hfill
  \subfigure[Pure-A]{\includegraphics[width=0.33\textwidth]{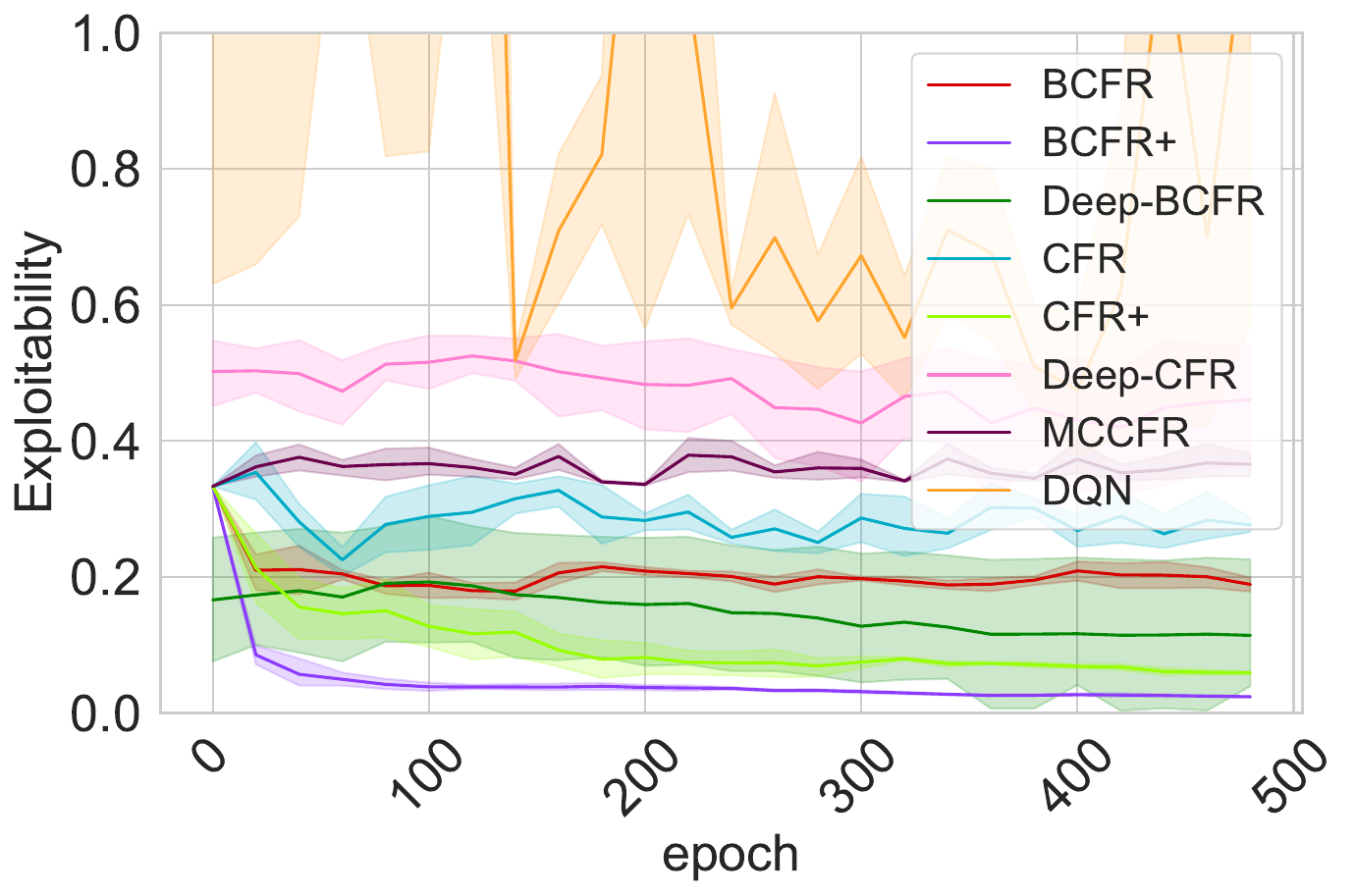}}\hfill
  \\
  \subfigure[Mixed-1]{\includegraphics[width=0.33\textwidth]{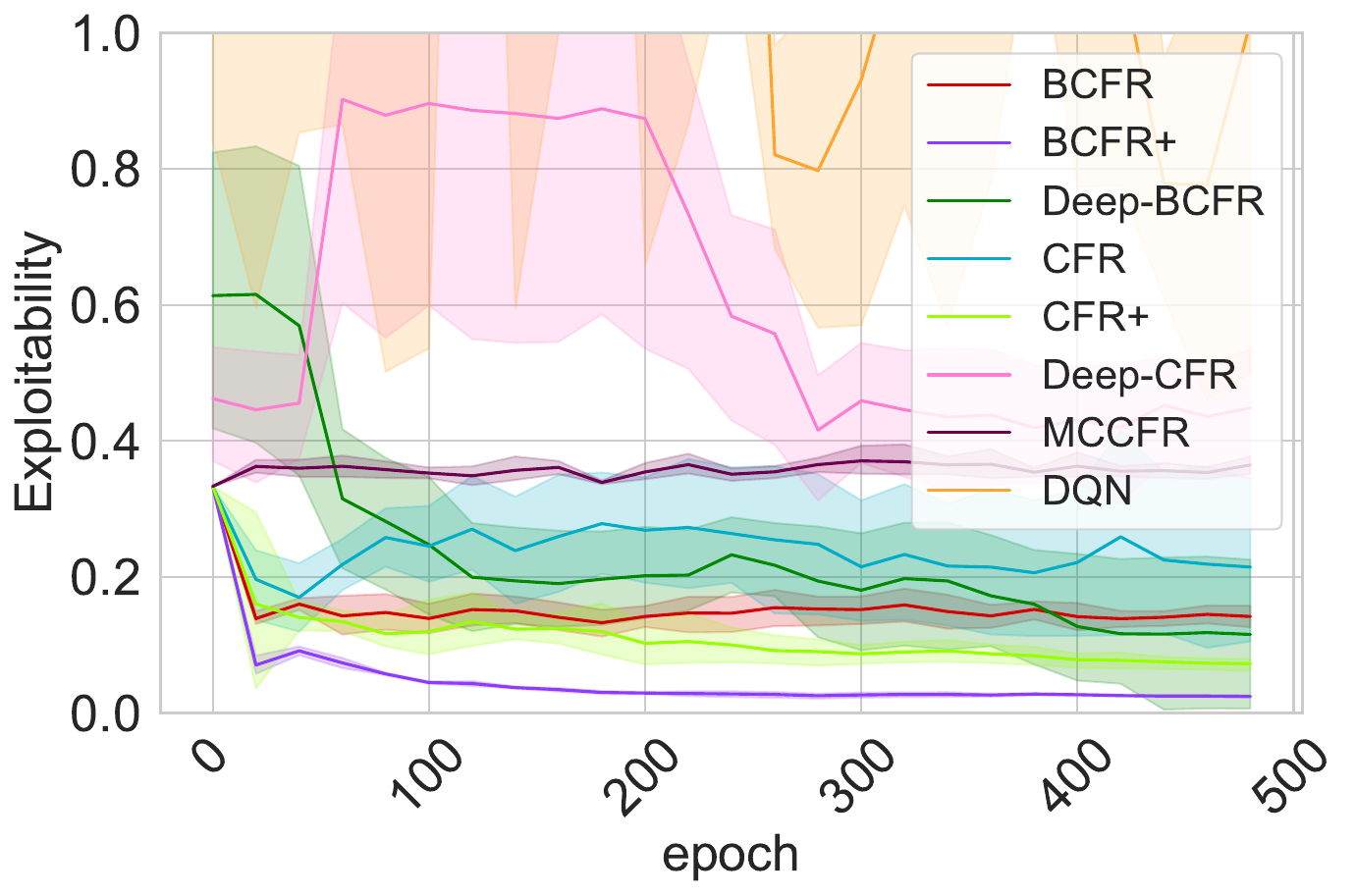}}\hfill
  \subfigure[Mixed-2]{\includegraphics[width=0.33\textwidth]{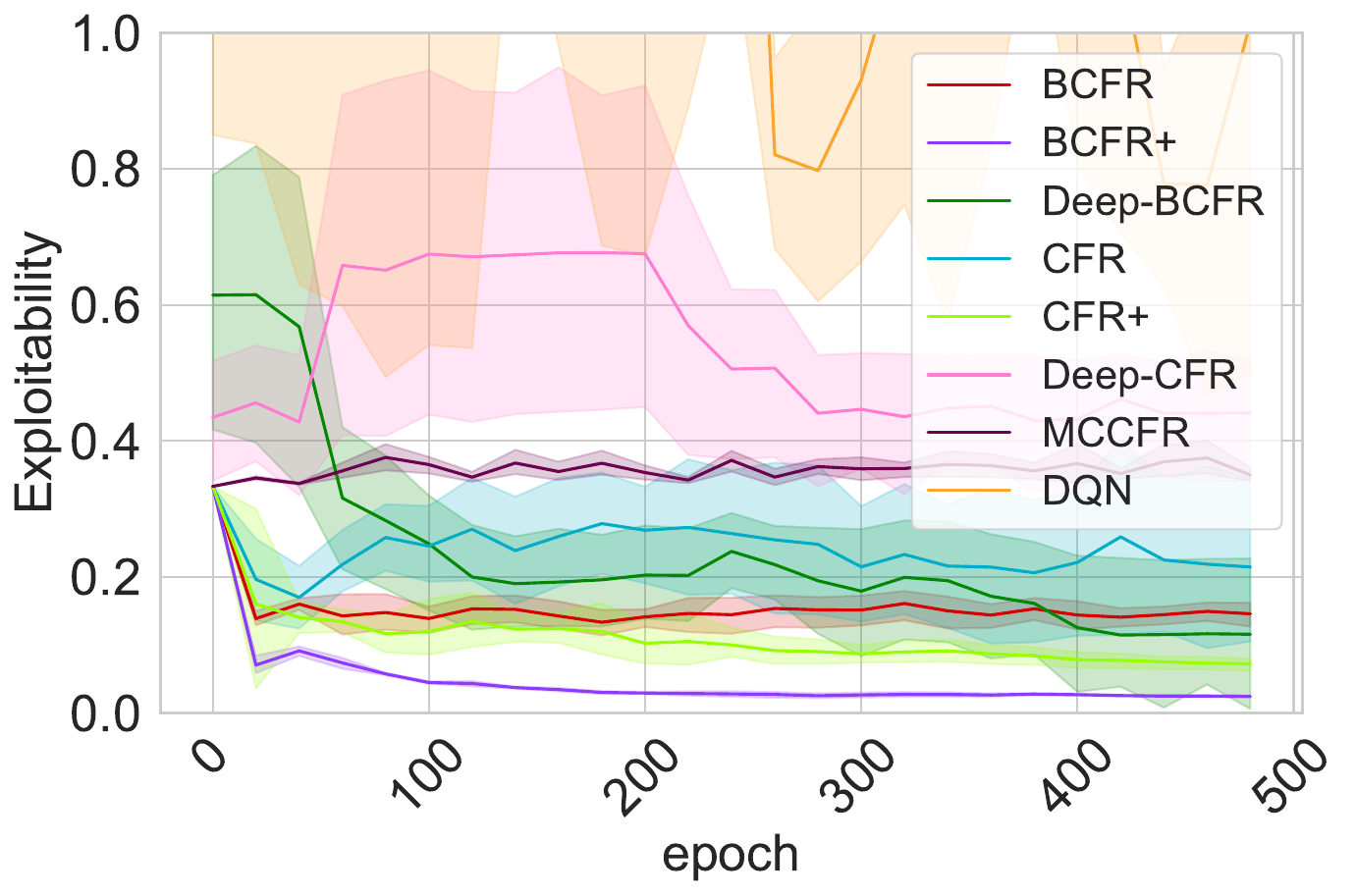}}\hfill
  \subfigure[Mixed-3]{\includegraphics[width=0.33\textwidth]{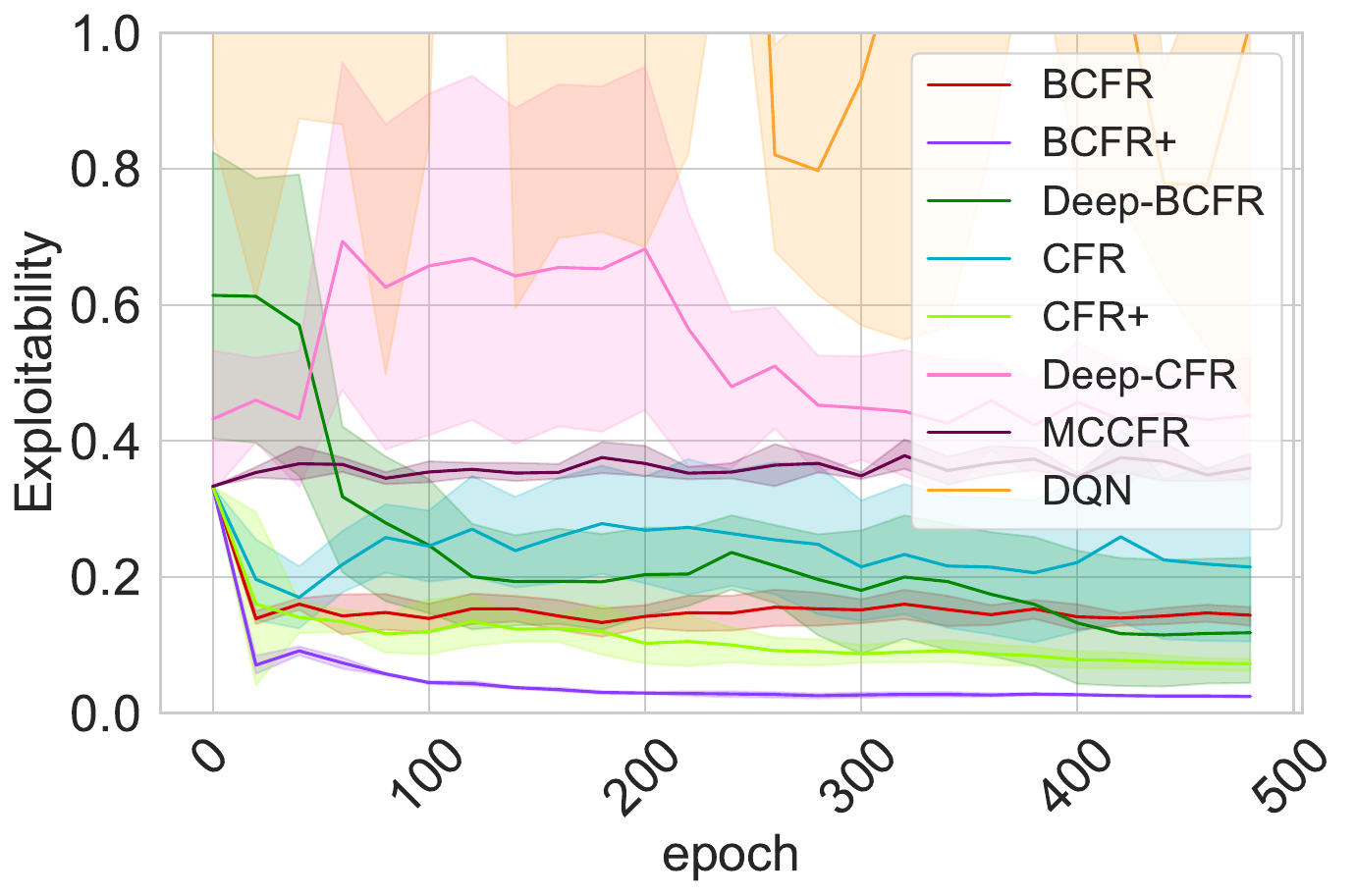}}\hfill
  \caption{ A comparison of our proposed Bayesian-CFR algorithms with baselines, including CFR~\cite{zinkevich2007regret}, CFR+~\cite{tammelin2014solving}, MCCFR~\cite{lanctot2009monte}, Deep CFR~\cite{brown2019deep}, and DQN~\cite{mnih2013playing} in Texas hold’em. The top row shows Bayesian games with pure-type players, and the bottom row shows Bayesian games with mixed-type players. A smaller exploitability implies closer ``distance" to the Nash Bayesian Equilibrium. Our Bayesian-CFR algorithms significantly outperforms all baselines.
  }\label{fig:Exp1}
\end{figure*}

\begin{table*}[h!]

\centering
\resizebox{0.985\textwidth}{!}{%
\begin{tabular}{@{}clcccccccc@{}}
\toprule\toprule
\multirow{2}*{Type} & & \multicolumn{3}{c}{Ours} & \multirow{2}*{CFR} & \multirow{2}*{CFR+} & \multirow{2}*{Deep-CFR} & \multirow{2}*{MCCFR} & \multirow{2}*{DQN}\\ 
\cline{3-5}
 & & BCFR & BCFR+  & Deep-BCFR &  &  &  &  & \\ \midrule
Pure-N  &  & 0.22  & 0.02  & 0.01 &  0.33 & 0.06 & 0.15 & 0.35 & 0.59   \\
Pure-C  &  & 0.19  & 0.02  & 0.01 & 0.48 & 0.08 & 0.21 & 0.35 & 3.23   \\
Pure-A  &  & 0.19  & 0.02  & 0.11 & 0.22 & 0.05 & 0.45 & 0.37 & 1.90     \\ \midrule

Mixed-1  &  & 0.14  & 0.02  & 0.12  &  0.22 & 0.07 & 0.46 & 0.34 & 0.77   \\ 
Mixed-2  &  & 0.14  & 0.02  & 0.12  &  0.21 & 0.06 & 0.45 & 0.34 & 0.76    \\ 
Mixed-3  &  & 0.15  & 0.02  & 0.11  & 0.21 & 0.07 & 0.44 & 0.35 & 0.77 \\ \midrule
Averaged  &  & 0.17  & 0.02  & 0.08  &  0.28 & 0.07 & 0.36 & 0.35 &  1.34   \\
\bottomrule\bottomrule
\end{tabular}%
}
\caption{ This table summarizes the achieved exploitability for Bayesian-CFR, Bayesian-CFR+, and Deep Bayesian-CFR, as well as the baseline algorithms. We observe significant improvement over all baselines without using beilef models. In particular, Bayesian-CFR+ achieves the optimal result with an exploitability of 0.02 in all games. The results demonstrate that our proposed Bayesian-CFR algorithms can efficiently solve Bayesian games with both pure-type and mixed-type players. 
}
\label{tab:sum}
\vskip -0.2in
\end{table*}


\paragraph{Evaluation against baselines.}
We evaluate the {\it exploitability} of the proposed algorithms, Bayesian-CFR,  Bayesian-CFR+, and Deep Bayesian-CFR and compare them against baselines, in Bayesian games with both pure-type and mixed-type players.
The results, as shown in Fig.~\ref{fig:Exp1}, demonstrate that the Bayesian-CFR algorithms can effectively converge and significantly improve performance (i.e., minimizing exploitability).
When facing players with unknown types/properties in Bayesian games, all CFR-based algorithms that utilize our proposed Bayesian regret achieve better performance than those that do not use Bayesian regret.
The performance of the reinforcement learning like the DQN algorithm is unstable and has difficult to converge. 

We also verify that for mixed-type players, the proposed Bayesian-CFR algorithms can achieve a similar exploitability and outperform baseline algorithms. Bayesian-CFR+ (BCFR+) achieves stable exploitability in all cases in our evolution. The exploitability values are summarized in Table~\ref{tab:sum} .
These results demonstrate that the Bayesian-CFR family of algorithms can effectively converge and significantly improve the performance in Bayesian games.
Plots of all evaluation results can be found in the appendix.

\textbf{Ablation studies.} 
To further study the contribution of the Bayesian module in our design, we conducted additional experiments by varying Bayesian-CFR (BCFR): BCFR without updating the posterior belief distribution and CFR that further removes the Bayesian beliefs entirely. We also compute the exploitability of an ideal game with complete information, which serves as a fundamental (non-achievable) lower bound for Bayesian games. Our numerical results are summarized in Table~\ref{tab:abl}.
In Bayesian games, Bayesian-CFR archives an exploitability very close to that of an ideal case of complete information games (0.19 vs. 0.16). This is mainly due to players' ability to reason about others' types through belief model updates, throughout the game.

\begin{table*}[h!]

\centering
\resizebox{0.55\textwidth}{!}{%
\begin{tabular}{@{}clcccc@{}}
\toprule\toprule
\multirow{2}*{Type} & & \multirow{2}*{CIG} & \multirow{2}*{BCFR}  & BCFR  & \multirow{2}*{CFR}\\
 & &  &   & w.o. posterior & \\
\midrule
Pure-N  &  & 0.18  & 0.22  & 0.31          &  0.33   \\
Pure-C  &  & 0.08  & 0.19  & 0.42 & 0.48   \\
Pure-A  &  & 0.16  & 0.19  & 0.22 & 0.22     \\ \midrule

Mixed-1  &  & 0.11  & 0.14  & 0.21  &  0.22   \\ 
Mixed-2  &  & 0.13  & 0.14  & 0.20  &  0.21    \\ 
Mixed-3  &  & 0.15  & 0.15  & 0.19  & 0.21  \\ \midrule
Mixed-4  &  & 0.19  & 0.22  & 0.31  &  0.32   \\ 
Mixed-5  &  & 0.20  & 0.23  & 0.31  &  0.33    \\ 
Mixed-6  &  & 0.22  & 0.23  & 0.30  & 0.33  \\ \midrule
Mixed-7  &  & 0.17  & 0.19  & 0.25  &  0.27   \\ 
Mixed-8  &  & 0.17  & 0.20  & 0.25  &  0.28    \\ 
Mixed-9  &  & 0.15  & 0.18  & 0.24  & 0.27  \\ \midrule
Averaged  &  & 0.16  & 0.19  & 0.27  &  0.31   \\
\bottomrule\bottomrule
\end{tabular}%
}
\caption{
Our ablation study compares Bayesian-CFR (BCFR) with a modified version without performing posterior belief updates during the game (i.e., BCFR w.o. posterior) and further removes the belief model all together (i.e., CFR). We also consider an ideal case of complete information games to obtain a fundemental lower bound for Bayesian games. Our results show that BCFR obtains exploitability close to the lower bound, while the improvements are mainly due to modeling of other players through belief model updates in Bayesian counterfactual regret minimization. 
}
\label{tab:abl}
\vskip -0.15in
\end{table*}


\section{Conclusion}
In this paper, we present Bayesian-CFR, a novel algorithm for solving Bayesian Nash Equilibria in Bayesian games. 
We present a kernel-density method to estimate the posterior belief distribution to reason about the beliefs about the game and about other players’ types. This allows us to define the Bayesian regret and present a Bayesian-CFR minimization algorithm for computing the Bayesian Nash Equilibria. The Bayesian-CFR algorithm, together with its extensions to Bayesian-CFR+ and Deep Bayesian CFR, are shown to significantly outperform existing methods in classical Texas Hold’em games. In the overwhelming majority of real-life situations, players
have only partial information about the payoff relevant data of the game and the other players. Our work supports modeling autonomous agent interaction by reasoning about the behaviors and various properties of interest of other agents.

{
\small
\bibliography{ref}
\bibliographystyle{unsrtnat} 
}








\newpage
\appendix

\section{Proof of Bayesion}

\begin{lemma}\cite{mandyam2023kernel}
\label{lem:1}
Let $w_{m},w^{'}_{m} > 0$ be the bandwidths chosen to estimate the joint probability and marginal probability, respectively. with $mw_{m}^{d/2}\rightarrow \infty$ and $mw_{m}^{',d^{'}/2}\rightarrow \infty$ as $m\rightarrow\infty$, where $d$ is the dimension of $(h,\Theta)$ and $d^{'}$ is the dimension of $\Theta$. Then,
\begin{equation}
    \lim\limits_{m\rightarrow \infty}\widehat{Pr}_{\chi}(O_{\chi}|\Theta) = Pr_{\chi}(O_{\chi}|\Theta).
\end{equation}
\end{lemma}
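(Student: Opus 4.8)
The plan is to reduce the statement to the pointwise consistency of a single conditional kernel density estimate and then exploit the factorized form of the likelihood. Since each observed history $h_i$ is assumed to depend only on the current game state and the type $\theta$, the histories are conditionally independent given $\theta$, so that $\widehat{Pr}_{\chi}(O_{\chi}|\theta)=\prod_{i=1}^{n}\widehat{Pr}(h_i|\theta)$ and likewise for the true likelihood. Because $n$ is fixed while $m\rightarrow\infty$, it suffices to prove $\widehat{Pr}(h_i|\theta)\rightarrow Pr(h_i|\theta)$ for each fixed $i$; the finite product then converges by the continuous mapping theorem, multiplication being continuous. Each factor is the ratio $\widehat{Pr}_{\chi}(h_i,\theta)/\widehat{Pr}_{\chi}(\theta)$ of a joint kernel estimator on the $d$-dimensional space of pairs $(h,\theta)$ to a marginal kernel estimator on the $d^{'}$-dimensional type space, so the core of the argument is the consistency of these two ordinary kernel density estimators.

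For each of the two estimators I would carry out the standard bias--variance decomposition. Using the smoothness hypothesis that $Pr(h|\theta)$ and $P(\theta)$ are twice differentiable with continuous, square-integrable second derivatives, a second-order Taylor expansion of the density inside the kernel integral, together with the usual kernel moment conditions (zero first moment, finite second moment), gives a bias of order $O(w_{m}^{2})$ for the joint estimator and $O((w^{'}_{m})^{2})$ for the marginal estimator, both of which vanish as the bandwidths shrink. The variance of each estimator is governed by the effective sample size inside the kernel window: the bandwidth conditions $mw_{m}^{d/2}\rightarrow\infty$ and $mw_{m}^{',d^{'}/2}\rightarrow\infty$ are exactly what is needed, in the mean-square / integrated sense used in~\cite{mandyam2023kernel}, to force the stochastic fluctuation terms to zero. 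Combining vanishing bias and variance yields $\widehat{Pr}_{\chi}(h_i,\theta)\rightarrow Pr_{\chi}(h_i,\theta)$ and $\widehat{Pr}_{\chi}(\theta)\rightarrow Pr_{\chi}(\theta)$.

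To pass from the numerator and denominator limits to the ratio, I would write the error of the conditional estimate as
\[
\widehat{Pr}(h_i|\theta)-Pr(h_i|\theta)=\frac{\widehat{Pr}_{\chi}(h_i,\theta)-Pr_{\chi}(h_i,\theta)}{\widehat{Pr}_{\chi}(\theta)}+Pr_{\chi}(h_i,\theta)\left(\frac{1}{\widehat{Pr}_{\chi}(\theta)}-\frac{1}{Pr_{\chi}(\theta)}\right),
\]
and then invoke a Slutsky-type argument: on the relevant support the true marginal $Pr_{\chi}(\theta)$ is bounded away from zero, so the denominators stay bounded away from zero with high probability and both displayed terms converge to zero. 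This delivers the per-factor convergence, and the finite product over $i=1,\dots,n$ completes the proof.

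The main obstacle is the handling of the denominator and the precise mode of convergence. The cited bandwidth condition $mw_{m}^{d/2}\rightarrow\infty$ is weaker than the classical $mw_{m}^{d}\rightarrow\infty$ that guarantees pointwise variance decay, so the convergence must be understood in the integrated / $L_2$ sense established in~\cite{mandyam2023kernel} rather than naively pointwise; one must ensure the argument is uniform enough, or restricted to a region where $Pr_{\chi}(\theta)$ is bounded below, so that the ratio and the subsequent finite product remain well behaved. Since the result is quoted directly from~\cite{mandyam2023kernel}, the cleanest route is to verify that our smoothness hypothesis and the stated bandwidth scalings match the hypotheses of their conditional kernel density consistency theorem, and then invoke it, rather than reproving estimator consistency from scratch.
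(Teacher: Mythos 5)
Your proposal follows essentially the same route as the paper's own proof: establish consistency of the joint (numerator) and marginal (denominator) kernel density estimators separately, then pass to the ratio by a continuous-mapping/Slutsky argument. In fact your version is more careful than the paper's, which simply cites multivariate-KDE consistency results and invokes the continuous mapping theorem, glossing over the product across the $n$ histories, the need for the denominator to stay bounded away from zero, and the nonstandard bandwidth exponent $m w_m^{d/2}$ that you correctly flag as requiring the integrated/$L_2$ mode of convergence from the cited reference.
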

\begin{proof} of Lemma~\ref{lem:1}
Due to the continuous nature of the likelihood function and the analysis in~\cite{wand1994kernel,chacon2018multivariate} about the bounded mean squared integral error in multivariate kernel density estimators, we know that as the sample size increases, the mean of the density estimator converges. Also, the boundedness of the mean squared integral ensures that the mean tends toward an unbiased estimator. Moreover, because our defined kernel function $K$ is square-integrable and twice differentiable, we know that its value will converge to a constant as the data size increases. Also, since this kernel function is an estimation of the true posterior, the following formula holds:
\begin{equation}
\begin{aligned}
\lim\limits_{m\rightarrow\infty}\frac{1}{m}\sum_{j=1}^{m}K(\frac{d_{s}(h_{i},h_{j})}{w})K^{'}(\frac{d_r{(\Theta,\Theta_j})}{w^{'}}) = Pr(O_{\chi},\Theta)
\end{aligned}
\end{equation}
The same as:
\begin{equation}
\begin{aligned}
\lim\limits_{m\rightarrow\infty}\frac{1}{m}\sum_{l=1}^{m}K(\frac{d_r{(\Theta,\Theta_l})}{w^{'}}) = Pr(\Theta)
\end{aligned}
\end{equation}
By the Continuous Mapping Theorem~\cite{mann1943stochastic}, we can get:
\begin{equation}
\begin{aligned}
\widehat{Pr}_{\chi}(O_{\chi}|\Theta) = 
\lim\limits_{m\rightarrow\infty}\frac{1}{m}\sum_{j=1}^{m}\frac{K(\frac{d_{s}(h,h_{j})}{w})K^{'}(\frac{d_r{(\Theta,\Theta_j})}{w^{'}})}{\sum_{l=1}^{m}K(\frac{d_r{(\Theta,\Theta_l})}{w^{'}})} = \frac{Pr(O_{\chi},\Theta)}{Pr(\Theta)} = Pr_{\chi}(O_{\chi}|\Theta)
\end{aligned}
\end{equation}
\end{proof}

\begin{proof} of Theorem~\ref{the:bayesian_converge}

According to assumption~\ref{ass:equal}. we can denote equivalence class $[\Theta] = \{\Theta^{'}:\Theta^{'}\simeq \Theta\}$, and the quotient space is defined as $\mathit{\widetilde{\Theta}} = \mathit{\Theta}/\simeq = \{[\Theta]:\Theta \in \mathit{\Theta}\}$. The corresponding canonical projection is denoted by $\pi:\mathit{\Theta}\rightarrow\mathit{\widetilde{\Theta}}$. Then, the projection $\pi$ induces a prior distribution on $\mathit{\widetilde{\Theta}}$ denoted by $\mathcal{\widetilde{P}}:\mathcal{\widetilde{P}}(A)=\mathcal{P}(\pi^{-1}(A))$.
Additionally, we can get:
\begin{equation}
||Pr(\cdot|\Theta_{1}) - Pr(\cdot|\Theta_{2})||_{L_{1}} = 0 \Leftrightarrow Pr(\cdot|\Theta_{1}) = Pr(\cdot|\Theta_{2}), a.e. \Leftrightarrow \Theta_{1}\simeq \Theta_{2} \Leftrightarrow [\Theta_{1}] = [\Theta_{1}]
\end{equation}
Given that $\text{KL}(\Theta,\Theta^{*}) = \text{KL}([\Theta],[\Theta^{*}])$ by the definition of the equivalence class.
Then, $A=\{[\Theta]:\text{KL}([\Theta],[\Theta^{*}])<\epsilon\}$, and $\pi^{-1}(A)=\{\Theta:\text{KL}(\Theta,\Theta^{*})<\epsilon\}$.
So $\mathcal{\widetilde{P}}(\{[\Theta]:\text{KL}([\Theta],[\Theta^{*}])<\epsilon\})=\mathcal{P}(\{\Theta:\text{KL}(\Theta,\Theta^{*})<\epsilon\})>0$ for any $\epsilon > 0$, that is, the $\text{KL}$ support condition is satisfied.

According to~\cite{van2000asymptotic,ghosal2017fundamentals,mandyam2023kernel}, the Bayesian model is parameterized by $[\Theta]$, and $\Theta$ is a compact set that exists for consistent tests. Then, according to~\cite{schwartz1965bayes}, the posterior $\widetilde{Pr}_{m}^{n}$ on $\Theta$ is consistent. So, for any $\epsilon > 0$, $\lim\limits_{m\rightarrow\infty\atop n\rightarrow\infty}Pr_{m}^{n}(\{\Theta:||Pr(\cdot|\Theta)-Pr(\cdot|\Theta^{*})||_{L_{1}}<\epsilon\}) = 1$.
\end{proof}

\section{Proof of Theorem~\ref{the:regret}}
\begin{proof}
Define $D(I)$ as the information sets of player $i$ that can be reached from $I$ (including $I$ itself).
Define $\sigma_{\theta}|{D(I)\rightarrow\sigma^{'}}$ as a strategy profile identical to $\sigma_{\theta}$ except within the information sets in $D(I)$, where it matches $\sigma_{\theta}^{'}$. The \textbf{Bayesian full counterfactual regret} at $I$ is given by:

\begin{equation}
  R_{i,\Theta,full}^{T}(I) = \frac{1}{T} \max\limits_{a\in \sum_i}\sum\limits_{t=1}^{T}\sum\limits_{\theta\in \Theta}Pr(\theta|O_{\chi}^{t})\pi_{-i,\theta}^{\sigma^{t}}(I)(u_{i,\theta}(\sigma_{\theta}^{t}|_{D(I)\rightarrow \sigma_{\theta}^{'}}, I)-u_{i,\theta}(\sigma_{\theta}^{t},I)).  
\end{equation}

Similarly, let $R_{i,\Theta,full}^{T,+}(I) = \max(R_{i,\Theta,full}^{T}(I),0)$.
Moreover, we define $\text{S}^{\sigma_{\theta}}_{i}(I, a)$ as the set of all probability at the next information sets of player $i$, where $a\in A(I)$ and $\sigma_{\theta}$ is the current strategy.
If $\sigma_{\theta}$ implies that $I$ is unreachable because of an action of player $i$, that action is changed to allow $I$ to be reachable.
 Define $\text{S}_{i}(I)=\bigcup_{a\in A(I)}\text{S}_{i}(I,a)$

\begin{lemma}
$R_{i,\Theta,full}^{T}(I)\leq R_{i,\Theta,imm}^{T}(I)+\sum_{I^{'}\in \text{S}_{i}(I)}R_{i,\Theta,full}^{T}(I)$
\end{lemma}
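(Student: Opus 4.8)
The plan is to mirror the recursive information-set decomposition used in the standard CFR analysis of Zinkevich et al., carrying the Bayesian posterior weights $Pr(\theta\mid O^t_\chi)$ through the argument as fixed linear coefficients. First I would unfold the counterfactual value at $I$ according to the action taken there, writing, for any deviating profile $\sigma'_\theta$ supported on $D(I)$,
\begin{equation*}
u_{i,\theta}(\sigma^t_\theta|_{D(I)\to\sigma'_\theta}, I) = \sum_{a\in A(I)} \sigma'_\theta(I,a)\, u_{i,\theta}(\sigma^t_\theta|_{D(I)\to\sigma'_\theta}, I\cdot a),
\end{equation*}
so that the gain from deviating on the whole subtree $D(I)$ can be split by inserting the intermediate profile that adopts $\sigma'$'s action choice at $I$ but continues with $\sigma^t$ below $I$. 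This yields a telescoping identity of the shape ``(full gain) $=$ (gain from the choice at $I$) $+$ (gain from deviating at successors),'' where the first summand depends only on the action distribution $\sigma'(I,\cdot)$, and the second collects the deviations confined to the information sets in $\text{S}_i(I)$.

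Next I would apply the elementary inequality $\max_{\sigma'}[A(\sigma')+B(\sigma')] \le \max_{\sigma'}A(\sigma') + \max_{\sigma'}B(\sigma')$ to separate the single maximization over the joint deviation $\sigma'$ into two independent maximizations. After averaging over $t=1,\dots,T$, the first piece is bounded above by $R^{T}_{i,\Theta,imm}(I)$, because restricting attention to the action choice at $I$ (while playing $\sigma^t$ afterward) is exactly the single-step deviation defining the immediate counterfactual regret. The second piece decomposes across the successor information sets: since the deviations on the disjoint subtrees rooted at the elements of $\text{S}_i(I)$ are independent, the maximum factorizes into $\sum_{I'\in \text{S}_i(I)}R^{T}_{i,\Theta,full}(I')$. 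Throughout, the posterior weight $Pr(\theta\mid O^t_\chi)$ and the counterfactual reach probability $\pi^{\sigma^t}_{-i,\theta}(I)$ act only as nonnegative multipliers: the former is independent of the deviation and passes through every maximization untouched, while the latter factors along the tree via $\pi^{\sigma^t}_{-i,\theta}(I') = \pi^{\sigma^t}_{-i,\theta}(I)\cdot(\text{opponent/chance reach from }I\text{ to }I')$.

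The main obstacle I anticipate is bookkeeping rather than conceptual: one must verify that the reach-probability weights reindex consistently when passing from $I$ to a successor $I'\in\text{S}_i(I)$, so that the second maximization genuinely reconstitutes the defining expression of $R^{T}_{i,\Theta,full}(I')$ with its correct $\pi^{\sigma^t}_{-i,\theta}(I')$ prefactor, and that the ``$\max$ of a sum $\le$ sum of $\max$es'' step is applied to deviations living on disjoint subtrees so that no value is double counted. Because the Bayesian posterior enters only as a fixed convex combination over $\theta$ that commutes with every maximization and with the nonnegative reach weights, the whole argument reduces, type by type, to the classical recursion; summing the per-type inequalities against $Pr(\theta\mid O^t_\chi)$ then produces the stated bound (with $R^{T}_{i,\Theta,full}(I')$ in the summand, $I'$ ranging over $\text{S}_i(I)$).
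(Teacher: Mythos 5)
Your proposal is correct and follows essentially the same route as the paper's proof: both decompose the full counterfactual deviation at $I$ into the single-step deviation at $I$ plus deviations on the successor sets $\text{S}_i(I,a)$, apply the inequality $\max(A+B)\leq \max A + \max B$ to split the joint maximization, reindex the counterfactual reach weights so that $\pi^{\sigma^t}_{-i,\theta}(I)$ times the transition probability becomes $\pi^{\sigma^t}_{-i,\theta}(I')$, and let the posterior weights $Pr(\theta\mid O^t_\chi)$ pass through every maximization as fixed nonnegative coefficients. The bookkeeping concerns you flag (disjointness of the subtrees under perfect recall, correct reconstitution of $R^{T}_{i,\Theta,full}(I')$) are exactly the points the paper handles, the former being invoked when passing to positive parts in the subsequent corollary.
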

\begin{corollary}
\label{cor:bound}
$R_{i,\Theta,full}^{T,+}(I)\leq R_{i,\Theta,imm}^{T,+}(I)+\sum_{I^{'}\in \text{S}_{i}(I)}R_{i,\Theta,full}^{T,+}(I)$
\end{corollary}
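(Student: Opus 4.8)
The plan is to obtain Corollary~\ref{cor:bound} directly from the preceding Lemma by applying the positive-part operator $(\cdot)^{+}=\max(\cdot,0)$ to both sides of its inequality. The Lemma already supplies the one-step recursive bound $R_{i,\Theta,full}^{T}(I)\leq R_{i,\Theta,imm}^{T}(I)+\sum_{I^{'}\in \text{S}_{i}(I)}R_{i,\Theta,full}^{T}(I^{'})$, so the only thing left to verify is that this inequality survives truncation at zero, term by term. Since this is a purely algebraic fact about the map $x\mapsto\max(x,0)$, I would not appeal to any game-theoretic structure at this stage.

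First I would use monotonicity of the positive part: because $a\leq b$ implies $\max(a,0)\leq\max(b,0)$, the Lemma yields $R_{i,\Theta,full}^{T,+}(I)\leq\big(R_{i,\Theta,imm}^{T}(I)+\sum_{I^{'}\in \text{S}_{i}(I)}R_{i,\Theta,full}^{T}(I^{'})\big)^{+}$. Next I would invoke subadditivity of the positive part, i.e. $(\sum_{k}x_{k})^{+}\leq\sum_{k}x_{k}^{+}$, which follows from $(x+y)^{+}\leq x^{+}+y^{+}$ by a trivial induction. Distributing the truncation across the single immediate term and the sum over child information sets then gives $R_{i,\Theta,full}^{T,+}(I)\leq R_{i,\Theta,imm}^{T,+}(I)+\sum_{I^{'}\in \text{S}_{i}(I)}R_{i,\Theta,full}^{T,+}(I^{'})$, which is exactly the asserted bound (reading the summand as $R_{i,\Theta,full}^{T,+}(I^{'})$ ranging over $I^{'}\in\text{S}_{i}(I)$).

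The hard part will not be the corollary itself, which is pure positive-part bookkeeping, but rather the Lemma it rests on, which I am permitted to assume here. That Lemma is where the game tree enters: one must expand the definitions of the full and immediate Bayesian counterfactual regrets, decompose the counterfactual value at $I$ into the immediate deviation plus the best-response continuation values at the successor infosets $\text{S}_{i}(I)$, and check that the belief weights $Pr(\theta|O_{\chi}^{t})$ and the opponent reach probabilities $\pi_{-i,\theta}^{\sigma^{t}}(I)$ factor through this split exactly as in the non-Bayesian CFR recursion. Once that recursion is in hand, the corollary is the immediate monotone-and-subadditive consequence described above.
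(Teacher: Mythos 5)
Your proposal is correct, and its core move---truncating the Lemma's recursion at zero---is the same idea the paper relies on; the difference lies in the starting point and in how explicitly the truncation step is justified. The paper never gives the corollary a proof of its own: inside a single proof block it derives the per-action form $R_{i,\Theta,full}^{T}(I) \leq R_{i,\Theta,imm}^{T}(I)+\max_{a\in A(I)}\sum_{I^{'}\in \text{S}_i(I,a)}R_{i,\Theta,full}^{T}(I^{'})$ and then jumps to the positive-part statement with a bare ``Therefore,'' citing perfect recall (disjointness of the sets $\text{S}_i(I,a)$ across actions); converting that per-action max into the sum over all of $\text{S}_{i}(I)$ also tacitly uses nonnegativity of the truncated terms. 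You instead take the Lemma exactly as stated (correctly reading the summand as $R_{i,\Theta,full}^{T}(I^{'})$, which is what the paper intends despite the typo) and supply precisely the algebra the paper omits: monotonicity of $x\mapsto\max(x,0)$ followed by subadditivity $(x+y)^{+}\leq x^{+}+y^{+}$. What your route buys is the observation that the corollary is a purely order-theoretic consequence of the Lemma, requiring no game structure at all---perfect recall is only genuinely needed later, where disjointness of the descendant sets $D(I^{'})$ drives the induction that sums immediate regrets over $D(I)$. What the paper's route buys is the slightly stronger intermediate bound with $\max_{a}\sum_{I^{'}\in \text{S}_i(I,a)}$ rather than the full sum, which is the form its subsequent bookkeeping implicitly uses. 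Both arguments are valid; yours additionally fills in the one step the paper leaves unjustified.
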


\begin{proof}
\begin{equation}
\begin{aligned}
&R_{i,\Theta,full}^{T}(I) = \frac{1}{T} \max\limits_{a\in A(I)}\max\limits_{\sigma_{\theta}\in \sum_i}\sum\limits_{t=1}^{T}\sum\limits_{\theta\in \Theta}Pr(\theta|O_{\chi}^{t})\pi_{-i,\theta}^{\sigma^{t}}(I)\\
&(u_{i,\theta}(\sigma_{\theta}^{t}|_{I\rightarrow a},I)-u_{i,\theta}(\sigma_{\theta}^{t},I)+\sum\limits_{I^{'}\in \text{S}_i(I,a)}\text{S}_i^{\sigma_{\theta}}(I^{'}|I,a)(\sigma_{\theta}^{t}|_{D(I)\rightarrow \sigma_{\theta}^{'}},I^{'})-u_{i,\theta}(\sigma_{\theta}^{t},I^{'}))). 
\end{aligned}
\end{equation}
so, we split the following two parts as:
\begin{equation}
\begin{aligned}
&R_{i,\Theta,full}^{T}(I) \leq \frac{1}{T} \max\limits_{a\in A(I)}\max\limits_{\sigma_{\theta}\in \sum_i}\sum\limits_{t=1}^{T}\sum\limits_{\theta\in \Theta}Pr(\theta|O_{\chi}^{t})\pi_{-i,\theta}^{\sigma^{t}}(I)(u_{i,\theta}(\sigma_{\theta}^{t}|_{I\rightarrow a},I)-u_{i,\theta}(\sigma_{\theta}^{t},I))\\
&+\frac{1}{T} \max\limits_{a\in A(I)}\max\limits_{\sigma_{\theta}\in \sum_i}\sum\limits_{t=1}^{T}\sum\limits_{\theta\in \Theta}Pr(\theta|O_{\chi}^{t})\pi_{-i,\theta}^{\sigma^{t}}(I)\sum\limits_{I^{'}\in \text{S}_i(I,a)}\text{S}_i^{\sigma_{\theta}}(I^{'}|I,a)(\sigma_{\theta}^{t}|_{D(I)\rightarrow \sigma_{\theta}^{'}},I^{'})-u_{i,\theta}(\sigma_{\theta}^{t},I^{'})).  
\end{aligned}
\end{equation}
The first part of the expression on the right hand side is the immediate regret.
\begin{equation}
\begin{aligned}
&R_{i,\Theta,full}^{T}(I) \leq R_{i,\Theta,imm}^{T}(I)\\
&+\frac{1}{T}\max\limits_{a\in A(I)}\max\limits_{\sigma_{\theta}\in \sum_i}\sum\limits_{t=1}^{T}\sum\limits_{\theta\in \Theta}Pr(\theta|O_{\chi}^{t})\sum\limits_{I^{'}\in \text{S}_i(I,a)}\pi_{-i,\theta}^{\sigma^{t}}(I^{'})(\sigma_{\theta}^{t}|_{D(I)\rightarrow \sigma_{\theta}^{'}},I^{'})-u_{i,\theta}(\sigma_{\theta}^{t},I^{'})).  
\end{aligned}
\end{equation}

\begin{equation}
\begin{aligned}
&R_{i,\Theta,full}^{T}(I) \leq R_{i,\Theta,imm}^{T}(I)\\
&+\max\limits_{a\in A(I)}\sum\limits_{I^{'}\in \text{S}_i(I,a)}\frac{1}{T} \max\limits_{\sigma_{\theta}\in \sum_i}\sum\limits_{t=1}^{T}\sum\limits_{\theta\in \Theta}Pr(\theta|O_{\chi}^{t})\pi_{-i,\theta}^{\sigma^{t}}(I^{'})(\sigma_{\theta}^{t}|_{D(I)\rightarrow \sigma_{\theta}^{'}},I^{'})-u_{i,\theta}(\sigma_{\theta}^{t},I^{'})).  
\end{aligned}
\end{equation}
so,
\begin{equation}
\begin{aligned}
R_{i,\Theta,full}^{T}(I) \leq R_{i,\Theta,imm}^{T}(I)+\max\limits_{a\in A(I)}\sum\limits_{I^{'}\in \text{S}_i(I,a)}R_{i,\Theta,full}^{T}(I^{'})
\end{aligned}
\end{equation}
Because the game is perfect recall, given distinct $a, a^{'} \in A(I)$, $\text{S}_i(I, a)$ and $\text{S}_i(I, a^{’})$ are disjoint. Therefore:
\begin{equation}
\begin{aligned}
R_{i,\Theta,full}^{T,+}(I) \leq R_{i,\Theta,imm}^{T,+}(I)+\max\limits_{a\in A(I)}\sum\limits_{I^{'}\in \text{S}_i(I,a)}R_{i,\Theta,full}^{T,+}(I^{'})
\end{aligned}
\end{equation}
\end{proof}

\begin{lemma}
\label{lem:bound}
$R_{i,\Theta,full}^{T,+}(I)\leq \sum_{I^{'}\in D(I)} R_{i,\Theta,imm}^{T,+}(I^{'})$
\end{lemma}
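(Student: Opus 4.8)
The plan is to derive Lemma~\ref{lem:bound} from Corollary~\ref{cor:bound} by a structural induction on the height of the information set $I$ within player $i$'s portion of the game tree, i.e., on the length of the longest chain of successor information sets in $D(I)$. Corollary~\ref{cor:bound} already supplies the one-step recursive inequality $R_{i,\Theta,full}^{T,+}(I)\leq R_{i,\Theta,imm}^{T,+}(I)+\sum_{I^{'}\in \text{S}_{i}(I)}R_{i,\Theta,full}^{T,+}(I^{'})$, so the only work left is to unroll this recursion down to the leaf information sets and check that the resulting sum over $D(I)$ counts each immediate-regret term exactly once.

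For the base case I would take an information set $I$ with no successors, i.e., $\text{S}_{i}(I)=\emptyset$; then $D(I)=\{I\}$ and Corollary~\ref{cor:bound} reduces to $R_{i,\Theta,full}^{T,+}(I)\leq R_{i,\Theta,imm}^{T,+}(I)$, which is exactly the claim. For the inductive step, I assume the bound holds for every successor $I^{'}\in \text{S}_{i}(I)$, namely $R_{i,\Theta,full}^{T,+}(I^{'})\leq \sum_{I^{''}\in D(I^{'})} R_{i,\Theta,imm}^{T,+}(I^{''})$, substitute these into Corollary~\ref{cor:bound}, and collect terms. The key bookkeeping identity I would invoke is the disjoint decomposition $D(I)=\{I\}\cup\bigcup_{I^{'}\in \text{S}_{i}(I)} D(I^{'})$, which lets the nested double sum $\sum_{I^{'}\in \text{S}_{i}(I)}\sum_{I^{''}\in D(I^{'})}R_{i,\Theta,imm}^{T,+}(I^{''})$ collapse into a single sum over $D(I)\setminus\{I\}$, so that adding back $R_{i,\Theta,imm}^{T,+}(I)$ yields precisely $\sum_{I^{'}\in D(I)} R_{i,\Theta,imm}^{T,+}(I^{'})$.

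The main obstacle—and the step that must be justified carefully rather than waved through—is the disjointness of the descendant sets $D(I^{'})$ for distinct successors $I^{'}\in \text{S}_{i}(I)$, and hence the validity of the decomposition of $D(I)$ above. This is exactly where the perfect-recall hypothesis enters: the same fact that makes $\text{S}_i(I,a)$ and $\text{S}_i(I,a^{'})$ disjoint for distinct actions (used already at the end of the proof of Corollary~\ref{cor:bound}) guarantees that no information set of player $i$ is a descendant of two different successors of $I$, so each $I^{''}\in D(I)$ contributes its immediate regret term on a unique branch. A secondary point worth noting is that all terms being summed are of the form $R_{i,\Theta,imm}^{T,+}(\cdot)\geq 0$, so passing from the per-action maximum in Corollary~\ref{cor:bound} to the full sum over $\text{S}_i(I)$, and dropping nothing in the induction, only loosens the inequality in the correct direction. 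Once disjointness is established, the remaining manipulation is routine rearrangement of finite nonnegative sums, and the induction closes.
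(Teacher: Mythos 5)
Your proof is correct and takes essentially the same route as the paper's: an induction over player $i$'s information-set tree (the paper inducts on $|D(I)|$, you on height, an immaterial difference), with the base case and one-step recursion supplied by Corollary~\ref{cor:bound}, and the inductive step closed via the perfect-recall disjointness of the sets $D(I^{'})$ for distinct successors $I^{'}\in \text{S}_{i}(I)$ together with the decomposition $D(I)=\{I\}\cup\bigcup_{I^{'}\in \text{S}_{i}(I)}D(I^{'})$. If anything, your explicit remark that nonnegativity of the $R_{i,\Theta,imm}^{T,+}$ terms lets one pass from the per-action maximum to the full sum over $\text{S}_{i}(I)$ is handled more carefully than in the paper's own write-up.
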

\begin{proof}
We prove this for a particular game recursively on the size of $D(I)$.
If an information set has no S, then Corollary~\ref{cor:bound} proves the result.
We use this as a basic step. Also, observe that $D(I) = \{I\}\cup\bigcup_{I^{'}\in\text{S}_{i}(I)}(D(I^{'}))$, and that if $I^{'}\in\text{S}_{i}(I)$, then $I \notin D(I^{'})$, implying $|D(I^{'})|<|D(I)|$.
Thus, by induction, we can establish that:
\begin{equation}
\begin{aligned}
R_{i,\Theta,full}^{T,+}(I) \leq R_{i,\Theta,imm}^{T,+}(I)+\sum\limits_{I^{'}\in \text{S}_i(I,a)}\sum\limits_{I^{''}\in \text{S}_i(I,a)}R_{i,\Theta,imm}^{T,+}(I^{''})
\end{aligned}
\end{equation}
Because the game is perfect recall, for any distinct $I^{'},I^{''}\in\text{S}_{i}(I)$, $D(I^{'})$ and $D(I^{''})$ are disjoint. Therefore:
\begin{equation}
\begin{aligned}
R_{i,\Theta,imm}^{T,+}(I)+\sum\limits_{I^{'}\in \text{S}_i(I,a)}\sum\limits_{I^{''}\in \text{S}_i(I,a)}R_{i,\Theta,imm}^{T,+}(I^{''}) = \sum\limits_{I^{'}\in D(I)}R_{i,\Theta,imm}^{T,+}(I^{'})
\end{aligned}
\end{equation}
The result immediately follows.

If $P (\emptyset) = i$, then $R_{i,\Theta,full}^{T}(\emptyset) = R_{i,\Theta}^{T} $, and the theorem follows from Lemma~\ref{lem:bound}. If this is not the case, then we can simply add a new information set at the beginning of the game, where player $i$ only has one action. So theorem~\ref{the:regret} can be proved.
\end{proof}
\end{proof}

\section{Proof of Theorems~\ref{the:bound1}}
Regret matching can be defined in a domain with a fixed set of actions $A$ and a payoff $u_{\theta}^{t}: A\rightarrow \mathbb{R}$ function. At each iteration $t$, a distribution over the actions, $\sigma_{\theta}^{t}$, is chosen based on the cumulative regret $R^{t}_{\Theta}$

\begin{equation}
    R_{\Theta}^{T}(a) = \frac{1}{T}\sum\limits_{t=1}^{T}\sum\limits_{\theta\in \Theta}Pr(\theta|O_{\chi}^{t})\Bigg( u_{\theta}^{t}(a) - \sum\limits_{a^{'}\in A}\sigma_{\theta}^{t}(a^{'})u_{\theta}^{t}(a^{'}) \Bigg)
\end{equation}
and define$ R_{\Theta}^{T,+}(a) = \max(R_{\Theta}^{T,+}(a), 0)$. To apply regret matching, one chooses the distribution:

\begin{equation}
\sigma_{\Theta}^{t}(a) = 
\left\{
\begin{aligned}
    &\frac{R_{+,\Theta}^{t-1}(a)}{\sum_{b\in A}R_{+,\Theta}^{t-1}(b)} & if \sum_{b\in A} R_{+,\Theta}^{t-1}(b) > 0\\
    &\frac{1}{\lvert A \rvert} & otherwise
    \end{aligned}
\right.
\end{equation}

\begin{lemma}
\label{lem:bound_R}
$\max_{a\in A}R^{T}_{\Theta}(a) \leq \frac{\Delta_{u,\Theta}\sqrt{|A|}}{\sqrt{T}}$, where $\Delta_{u,\Theta} = \max_{t\in\{1,..,T\}}\max_{a,a^{'}\in A}\sum\limits_{\theta \in \Theta}Pr(\theta|O_{\chi}^{t})\Big(u_{\theta}^{t}(a)-u_{\theta}^{t}(a^{'})\Big)$
\end{lemma}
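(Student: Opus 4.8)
The plan is to reduce this Bayesian regret-matching bound to the classical Blackwell/Hart--Mas-Colell argument by collapsing the type average into a single effective payoff. Define the \emph{posterior-averaged payoff} at iteration $t$ by $\bar u^{t}(a) = \sum_{\theta\in\Theta}Pr(\theta|O_{\chi}^{t})\,u_{\theta}^{t}(a)$, and let $\sigma^{t}=\sigma_{\Theta}^{t}$ be the common distribution chosen by the regret-matching rule displayed just before the lemma. Then the summand in the definition of $R_{\Theta}^{T}(a)$ is exactly the instantaneous regret vector $\rho^{t}(a) := \bar u^{t}(a) - \sum_{a'\in A}\sigma^{t}(a')\,\bar u^{t}(a') = \bar u^{t}(a) - \langle\sigma^{t},\bar u^{t}\rangle$, so that $T\,R_{\Theta}^{T}(a) = \sum_{t=1}^{T}\rho^{t}(a) =: \tilde R^{T}(a)$ is an ordinary cumulative regret driven by the single payoff sequence $\bar u^{t}$. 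Since $\langle\sigma^{t},\bar u^{t}\rangle$ is a convex combination of the coordinates of $\bar u^{t}$, each coordinate obeys $|\rho^{t}(a)| \le \max_{a',a''}\big(\bar u^{t}(a')-\bar u^{t}(a'')\big) \le \Delta_{u,\Theta}$, which is precisely where the posterior-averaged range $\Delta_{u,\Theta}=\max_{t}\max_{a,a'}\sum_{\theta}Pr(\theta|O_{\chi}^{t})(u_{\theta}^{t}(a)-u_{\theta}^{t}(a'))$ enters.

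Next I would run the quadratic-potential argument on $\Phi^{t} = \sum_{a\in A}\big(\tilde R^{t,+}(a)\big)^{2} = \|\tilde R^{t,+}\|_{2}^{2}$. Using the elementary bound $(\max(x+y,0))^{2}\le (x^{+}+y)^{2}$ coordinatewise with $x=\tilde R^{t-1}(a)$ and $y=\rho^{t}(a)$, one gets $\Phi^{t}\le \Phi^{t-1} + 2\langle \tilde R^{t-1,+},\rho^{t}\rangle + \|\rho^{t}\|_{2}^{2}$. The crucial step is the Blackwell orthogonality $\langle \tilde R^{t-1,+},\rho^{t}\rangle = 0$: because regret matching sets $\sigma^{t}(a)\propto \tilde R^{t-1,+}(a)$ whenever $\sum_{b}\tilde R^{t-1,+}(b)>0$, the value $\langle\sigma^{t},\bar u^{t}\rangle$ equals $\sum_{a}\tilde R^{t-1,+}(a)\bar u^{t}(a)\big/\sum_{b}\tilde R^{t-1,+}(b)$, and substituting this into $\langle \tilde R^{t-1,+},\rho^{t}\rangle$ cancels it to zero (the uniform fallback case has $\tilde R^{t-1,+}=0$, so the inner product is trivially zero). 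Combined with $\|\rho^{t}\|_{2}^{2}\le |A|\,\Delta_{u,\Theta}^{2}$, this yields the per-step increment $\Phi^{t}\le\Phi^{t-1}+|A|\Delta_{u,\Theta}^{2}$.

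Telescoping from $\Phi^{0}=0$ gives $\Phi^{T}\le T\,|A|\,\Delta_{u,\Theta}^{2}$, and since $\max_{a}\big(\tilde R^{T,+}(a)\big)^{2}\le\Phi^{T}$ we obtain $\max_{a}\tilde R^{T}(a)\le\max_{a}\tilde R^{T,+}(a)\le \Delta_{u,\Theta}\sqrt{|A|}\sqrt{T}$; dividing by $T$ recovers $\max_{a}R_{\Theta}^{T}(a)=\tfrac{1}{T}\max_{a}\tilde R^{T}(a)\le \Delta_{u,\Theta}\sqrt{|A|}/\sqrt{T}$, as claimed. The main obstacle is not the potential calculus, which is standard, but the bookkeeping that justifies the reduction: I must verify that the orthogonality condition survives the type averaging and the \emph{time-varying} posterior $Pr(\theta|O_{\chi}^{t})$. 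This works precisely because the regret-matching distribution $\sigma_{\Theta}^{t}$ is common across types and is taken with respect to the \emph{aggregated} cumulative regret $\tilde R^{t-1}$, so that folding the posterior into $\bar u^{t}$ at each $t$ leaves a legitimate one-payoff regret-matching instance; the per-iteration posterior dependence is then absorbed once and for all into the definition of $\Delta_{u,\Theta}$ as a maximum over $t$.
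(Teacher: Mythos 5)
Your proof is correct, and it takes a genuinely different --- and in fact stronger --- route than the paper's. The paper's proof works by per-type decomposition: it invokes Theorem~\ref{the:bayesian_converge} to assert $\sum_{\theta}Pr(\theta|O_{\chi}^{T})R_{\theta}^{T}\simeq R_{\Theta}^{T}$, applies the classical regret-matching bound of \cite{zinkevich2007regret} to each $R_{\theta}^{T}$ separately, and then combines with posterior weights. You instead collapse the posterior average into a single effective payoff sequence $\bar u^{t}$ and re-run the Hart--Mas-Colell/Blackwell potential argument from scratch on that one-payoff instance. Your route buys three things. First, exactness: the paper's opening step is an approximation ($\simeq$) imported from an asymptotic consistency theorem, so it is not a finite-$T$ inequality, whereas your argument holds exactly for every $T$ and never needs Theorem~\ref{the:bayesian_converge}. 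Second, it matches the algorithm: the cited per-type bound applies to a player doing regret matching on $R_{\theta}$, but the algorithm plays the common strategy $\sigma_{\Theta}^{t}$ obtained by matching on the \emph{aggregated} regret; your Blackwell orthogonality is verified against $\tilde R^{t-1,+}$, which is exactly the quantity driving $\sigma_{\Theta}^{t}$, so no such mismatch arises. Third, a tighter constant: the per-type decomposition naturally yields $\sum_{\theta}Pr(\theta|O_{\chi}^{T})\Delta_{u,\theta}$, the posterior average of per-type ranges, which dominates the stated $\Delta_{u,\Theta}$ (a maximum of an average is at most the average of maxima); the paper passes from the former to the latter only with a ``similarly,'' while you obtain the stated constant directly. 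What the paper's approach buys is brevity and reuse of known results. The one point you rightly flag as interpretive rather than mathematical: the paper's display for $R_{\Theta}^{T}(a)$ writes $\sigma_{\theta}^{t}$, and your reduction requires reading this as the common played strategy $\sigma_{\Theta}^{t}$ --- which is indeed the reading consistent with the regret-matching rule stated just before the lemma.
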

\begin{proof}
According to Theorem~\ref{the:bayesian_converge}, we can know $\sum_{\theta\in \Theta}Pr(\theta|O_{\chi}^{T})R_{\theta}^{T}\simeq R_{\Theta}^{T}$, and according to~\cite{zinkevich2007regret},we can know $\max_{a\in A}R^{T}_{\theta}(a) \leq \frac{\Delta_{u,\theta}\sqrt{|A|}}{\sqrt{T}}$. So
\begin{equation}
    R_{\Theta}^{T} \leq \sum_{\theta\in \Theta}Pr(\theta|O_{\chi}^{T})\frac{\Delta_{u,\theta}\sqrt{|A|}}{\sqrt{T}}.
\end{equation}
Similarly, we can get $\max_{a\in A}R^{T}_{\Theta}(a) \leq \frac{\Delta_{u,\Theta}\sqrt{|A|}}{\sqrt{T}}$.

\textbf{Proof Theorems~\ref{the:bound1}}
According to the Eqn.~\ref{eqn:regret_imm} we can get for all $I\in \mathcal{I}_{i},a\in A(I),\sum_{\theta\in \Theta}Pr(\theta|O_{\chi}^{t})\pi_{-i,\theta}^{\sigma_{\Theta}^{t}}(u_{i,\theta}(\sigma_{\theta}^{t}|_{I\rightarrow a})-u_{i,\theta}(\sigma_{\theta}^{t},I))\leq \Delta_{u,i,\Theta}$.

And according to Lemma~\ref{lem:bound_R}, we can get that the counterfactual regret of that node will be less than $\frac{\Delta_{u,i,\Theta}\sqrt{|A(I)|}}{\sqrt{T}}\leq \frac{\Delta_{u,i,\Theta}\sqrt{|A|}}{\sqrt{T}}$
\end{proof}

\section{Proof of Theorem~\ref{the:deep_bound}}
\begin{proof}
    
Assuming an online learning scheme:
\begin{equation}
\label{eql:sigma}
\sigma^{t}_{\theta}(I,a)= \left\{
\begin{aligned}
\frac{y_{+,\theta}^{t}(I,a)}{\sum_{a}y_{+,\theta}^{t}(I,a)} & , & \text{if} \sum_{a} y_{+,\theta}^{t}(I,a) > 0 \\
\text{arbitrary} & , & \text{otherwise}
\end{aligned}
\right.
\end{equation}
Corollary 3.0.6 in~\cite{morrill2016using}provides the upper bound of the total regret by leveraging a function of the L2 distance between $y_{t,\theta}^{+}$ and $R_{\theta}^{T,+}$ on each Infoset: $R_{\theta}^{T,+}$

\begin{equation}
    \begin{aligned}
        \max\limits_{a\in A}(R_{\theta}^{T}(I,a))^{2} &\leq |A|\Delta_{\theta}^{2}T + 4 \Delta_{\theta} |A| \sum\limits_{t=1}^{T}\sum\limits_{a\in A} \sqrt{(R_{+,\theta}^{t}(I,a) - y_{+,\theta}^{t}(I,a))^{2}}\\
        &\leq |A|\Delta_{\theta}^{2}T + 4\Delta_{\theta}|A|\sum\limits_{t=1}^{T}\sum\limits_{a\in A} \sqrt{(R_{\theta}^{t}(I,a) - y_{\theta}^{t}(I,a))^{2}}.
    \end{aligned}
\end{equation}

As shown in Eqn~\ref{eql:sigma},$\sigma^{t}(I,a)$ is invariant to rescaling across all actions at an infoset. Also, for any $C(I)>0$
\begin{equation}
    \begin{aligned}
        \max\limits_{a\in A}(R_{\theta}^{T}(I,a))^{2} &\leq |A|\Delta_{\theta}^{2}T + 4 \Delta_{\theta} |A| \sum\limits_{t=1}^{T}\sum\limits_{a\in A} \sqrt{(R_{\theta}^{t}(I,a) -C(I) y_{\theta}^{t}(I,a))^{2}}.
    \end{aligned}
\end{equation}

Let $x^{t}(I)$ be an indicator variable which is $1$ if $I$ was traversed on iteration $t$. If $I$ was traversed then $\widetilde{r}^{t}(I)$ was stored in $M_{V,p}$, otherwise $\widetilde{r}^{t}(I) = 0$. Assuming, for now, that $M_{V,p}$ is not full, so all sampled regrets are stored in the memory.

Let $\Pi^{t}$ be the fraction of iterations on which $x^{t}(I) = 1$, and

\begin{equation}
    \begin{aligned}
        \epsilon^{t}(I) = \left| \left| E_{t}[\widetilde{r}^{t}(I)|x^{t}(I) = 1] - V(I,a|\theta^{t}) \right| \right|_{2}.
    \end{aligned}
\end{equation}
By inserting canceling factor of $\sum_{t^{'} =1}^{t}x^{t^{'}}(I)$ and setting $C(I) = \sum_{t^{'} =1}^{t}x^{t^{'}}(I)$, we obtain
\begin{equation}
    \begin{aligned}
        \max\limits_{a\in A}(R_{\theta}^{T}(I,a))^{2} &\leq |A|\Delta_{\theta}^{2}T + 4 \Delta_{\theta} |A| \sum\limits_{t=1}^{T}(\sum\limits_{t^{'}=1}^{t}x^{t^{'}}(I))\sum\limits_{a\in A} \sqrt{\frac{\widetilde{R}_{\theta}^{t}(I,a)}{\sum_{t^{'}=1}^{t}x^{t^{'}}(I)} - y_{\theta}^{t}(I,a))^{2}}\\
        & = |A|\Delta_{\theta}^{2}T + 4 \Delta_{\theta} |A| \sum\limits_{t=1}^{T}(\sum\limits_{t^{'}=1}^{t}x^{t^{'}}(I)) \left| \left| E_{t}[\widetilde{r}_{\theta}^{t}(I)|x^{t}(I) = 1] - V(I,a|\theta_{\theta}^{t}) \right| \right|_{2}\\
        & = |A|\Delta_{\theta}^{2}T + 4 \Delta_{\theta} |A| \sum\limits_{t=1}^{T} t \Pi^{t}(I) \epsilon^{t}(I)\\
        & \leq |A|\Delta_{\theta}^{2}T + 4 \Delta_{\theta} |A| \sum\limits_{t=1}^{T} \Pi^{t}(I) \epsilon^{t}(I)
    \end{aligned}
\end{equation}

The first term of this expression is the same as the regret bound of tabular CFR algorithm, while the second term accounts for the approximation error. Theorem 3 in~\cite{brown2019deep} shows the regret bound for $K$-external sampling, where we can get the same results for the case of $K$-probe sampling. Thus, we can get

\begin{equation}\label{eq:39}
    \begin{aligned}
        \max\limits_{a\in A}(R_{\theta}^{T}(I,a))^{2} & \leq |A|\Delta_{\theta}^{2}TK^{2} + 4 \Delta_{\theta} |A| \sum\limits_{t=1}^{T}K^{2} \Pi^{t}(I) \epsilon^{t}(I).
    \end{aligned}
\end{equation}

The new regret bound in Eq.~\ref{eq:39} can be plugged into Theorem 3 in~\cite{lanctot2009monte}, similar to Theorem 4, to obtain

\begin{equation}
    \begin{aligned}
        \bar{R}_{p,\theta}^{T} \leq \sum\limits_{I\in \mathcal{I}_p}((1+\frac{\sqrt{2}}{\sqrt{\rho K}}) \Delta_{\theta} \frac{\sqrt{|A|}}{\sqrt{T}} + \frac{4}{\sqrt{T}}\sqrt{|A|\Delta_{\theta}\sum\limits_{t=1}^{T}\Pi^{t}(I)\epsilon^{t}(I)})
    \end{aligned}
\end{equation}

Simplifying the first term and rearranging,

\begin{equation}
    \begin{aligned}
        \bar{R}_{p,\theta}^{T} &\leq \sum\limits_{I\in \mathcal{I}_p}((1+\frac{\sqrt{2}}{\sqrt{\rho K}}) \Delta_{\theta} \frac{\sqrt{|A|}}{\sqrt{T}} + \frac{4\sqrt{|A|\Delta}}{\sqrt{T}}\sqrt{\sum\limits_{t=1}^{T}\Pi^{t}(I)\epsilon^{t}(I)})\\
        & =\sum\limits_{I\in \mathcal{I}_p}((1+\frac{\sqrt{2}}{\sqrt{\rho K}}) \Delta_{\theta} \frac{\sqrt{|A|}}{\sqrt{T}} + \frac{4\sqrt{|A|\Delta_{\theta}}}{\sqrt{T}}|\mathcal{I}_{p}|\frac{\sum_{I\in \mathcal{I}_{p}}}{|\mathcal{I}_{p}|}\sqrt{\sum\limits_{t=1}^{T}\Pi^{t}(I)\epsilon^{t}(I)})\\
        & \leq \sum\limits_{I\in \mathcal{I}_p}((1+\frac{\sqrt{2}}{\sqrt{\rho K}}) \Delta_{\theta} \frac{\sqrt{|A|}}{\sqrt{T}} + \frac{4\sqrt{|A|\Delta_{\theta}\mathcal{I}_{p}}}{\sqrt{T}}\sqrt{\sum\limits_{t=1}^{T}\sum_{I\in \mathcal{I}_{p}}\Pi^{t}(I)\epsilon^{t}(I)})
    \end{aligned}
\end{equation}

Now, lets consider the average MSE loss $\mathcal{L}_{V}^{T}(\mathcal{M}^{T})$ at time $T$ over the samples in memory $\mathcal{M}^{T}$

We start by stating two well-known lemmas:

\begin{lemma}
\label{lem:mse1}
The MSE can be decomposed into bias and variance components
\begin{equation}\label{eq:42}
    \mathbb{E}_{x}[(x-\theta)^{2}] = (\theta - \mathbb{E}[x])^{2} + Var(\theta)
\end{equation}
\end{lemma}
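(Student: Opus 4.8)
The plan is to establish the standard bias--variance decomposition by the familiar ``add and subtract the mean'' device, treating $\theta$ as a deterministic target (the true value being estimated) and $x$ as the random estimator, so that $\mathbb{E}[x]$ is a constant and $\mathrm{Var}(x) = \mathbb{E}[(x - \mathbb{E}[x])^2]$ by definition. This is the shape in which the lemma is later applied, where $x$ plays the role of the stored/sampled regret quantity and $\theta$ the idealized target value.

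First I would rewrite the error by inserting and removing the mean: $x - \theta = (x - \mathbb{E}[x]) + (\mathbb{E}[x] - \theta)$. Squaring this identity and applying linearity of expectation produces exactly three terms, namely the expected squared deviation of $x$ from its own mean, the squared deterministic offset $(\mathbb{E}[x]-\theta)^2$, and twice a cross term.

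Next I would evaluate each piece in turn. The first term is $\mathbb{E}[(x-\mathbb{E}[x])^2] = \mathrm{Var}(x)$ directly from the definition of variance. The second term $(\mathbb{E}[x]-\theta)^2$ is a constant, so its expectation equals itself, giving the squared bias $(\theta - \mathbb{E}[x])^2$. The cross term equals $2(\mathbb{E}[x]-\theta)\,\mathbb{E}[x-\mathbb{E}[x]]$, and since $\mathbb{E}[x - \mathbb{E}[x]] = \mathbb{E}[x] - \mathbb{E}[x] = 0$, it vanishes. Collecting the two surviving terms yields the claimed identity.

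I do not anticipate a genuine obstacle, as this is a textbook identity; the only point needing care is the status of $\theta$. It must be fixed and deterministic for the cross term to cancel, so the $\mathrm{Var}(\cdot)$ appearing in the statement should be read as $\mathrm{Var}(x)$, the variance of the estimator (the display's $\mathrm{Var}(\theta)$ is a harmless notational slip). If instead $\theta$ were random and correlated with $x$, the cross term would not vanish and the decomposition would acquire a covariance correction; I would therefore state explicitly that the lemma is invoked with $\theta$ the constant true value, which is consistent with how $\epsilon_L$ and the target $V(I,a\mid\theta^{t})$ enter the surrounding Deep Bayesian-CFR argument.
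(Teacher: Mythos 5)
Your proof is correct and is the standard derivation; the paper itself states this lemma as a well-known fact without providing any proof, so your add-and-subtract-the-mean argument supplies exactly the intended justification. Your remark that the display's $\mathrm{Var}(\theta)$ should be read as $\mathrm{Var}(x)$ (with $\theta$ deterministic, so the cross term vanishes) is also right, and it matches how the lemma is invoked in the surrounding Deep Bayesian-CFR argument, where $x$ is the sampled regret and the fitted value plays the role of the fixed target.
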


\begin{lemma}
\label{lem:mse2}
The mean of a random variable minimizes the MSE loss
\begin{equation}
    \arg\min\limits_{\theta}\mathbb{E}_{x}[(x-\theta)^{2}]=\mathbb{E}[x]
\end{equation}
and the value of the loss at $\theta = \mathbb{E}[x]$ is $Var(x)$.
\end{lemma}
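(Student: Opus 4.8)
The plan is to derive this directly from the bias--variance decomposition already recorded in Lemma~\ref{lem:mse1}, rather than re-deriving it from scratch, since the two statements are essentially two faces of the same elementary fact. I would first fix notation by writing the objective as $f(\theta) = \mathbb{E}_{x}[(x-\theta)^{2}]$, where $\theta$ is the (deterministic) minimization variable and $x$ the random variable, and then read off the structure of $f$ as a function of $\theta$.

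Applying Lemma~\ref{lem:mse1} gives
\begin{equation}
f(\theta) = (\theta - \mathbb{E}[x])^{2} + Var(x).
\end{equation}
The key observation is that the summand $Var(x)$ is independent of $\theta$, while the summand $(\theta - \mathbb{E}[x])^{2}$ is nonnegative and vanishes exactly when $\theta = \mathbb{E}[x]$. Hence $f$ attains its minimum precisely at $\theta = \mathbb{E}[x]$, which establishes the first claim $\arg\min_{\theta} f(\theta) = \mathbb{E}[x]$; substituting this minimizer back into $f$ yields $f(\mathbb{E}[x]) = Var(x)$, which is the second claim. This is the route I would present, because it isolates all $\theta$-dependence into a single square and makes both assertions immediate.

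As a self-contained alternative (useful if one prefers not to invoke Lemma~\ref{lem:mse1}), I would instead expand $f(\theta) = \mathbb{E}[x^{2}] - 2\theta\,\mathbb{E}[x] + \theta^{2}$, differentiate to obtain $f'(\theta) = 2\theta - 2\mathbb{E}[x]$, set this equal to zero to locate the stationary point $\theta = \mathbb{E}[x]$, and verify $f''(\theta) = 2 > 0$ to confirm that this strictly convex quadratic is globally minimized there. Evaluating $f$ at that point again recovers $Var(x)$ by the definition of the variance, so the two approaches agree.

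There is essentially no substantive obstacle here: the result is a textbook fact whose only genuine prerequisite is that the first two moments of $x$ are finite, so that $f(\theta)$ is well defined and $Var(x)$ is finite. The one point worth stating with care is that the decomposition in Lemma~\ref{lem:mse1} already confines the entire $\theta$-dependence to a single nonnegative square, which is exactly what turns the minimization into a one-line argument; I would also silently read the trailing term of that lemma as $Var(x)$ rather than the typographically written $Var(\theta)$, since this does not affect the reasoning.
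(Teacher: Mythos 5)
Your proof is correct. Note, however, that the paper itself offers no proof of this lemma at all: it is introduced in the appendix with the phrase ``We start by stating two well-known lemmas'' and then used directly, so there is no argument to compare against. Your primary route---reading the claim off from the bias--variance decomposition of Lemma~\ref{lem:mse1}, where all $\theta$-dependence sits in the nonnegative square $(\theta-\mathbb{E}[x])^{2}$---is exactly the one-line argument the authors evidently had in mind when pairing the two lemmas, and your self-contained alternative via expanding and differentiating the quadratic is equally valid. You are also right that the trailing term $Var(\theta)$ in the paper's statement of Lemma~\ref{lem:mse1} is a typo for $Var(x)$, since $\theta$ is the deterministic minimization variable there.
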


\begin{equation}
\begin{aligned}
    \mathcal{L}_{V}^{T} & = \frac{1}{\sum_{I\in \mathcal{I}_{p}}\sum_{t=1}^{T}x^{t}(I)}x^{t}(I) \left | \left | \widetilde{r}^{t}(I) - V(I|\theta^{T}) \right | \right |_{2}^{2}\\
    & \geq \frac{1}{|\mathcal{I}_{p}|T} \sum\limits_{I\in \mathcal{I}_{p}} \sum\limits_{t=1}^{T}x^{t}(I) \left | \left | \widetilde{r}^{t}(I) - V(I|\theta^{T}) \right | \right |_{2}^{2}\\
    & = \frac{1}{|\mathcal{I}_{p}|}\sum\limits_{I\in \mathcal{I}_{p}}\Pi^{T}(I)\mathbb{E}[ \left | \left | \widetilde{r}^{t}(I) - V(I|\theta^{T}) \right | \right |_{2}^{2}|x^{t}(I) = 1]
\end{aligned}
\end{equation}

Let $V^{*}$ be the model that minimizes $\mathcal{L}^{T}$ on $\mathcal{M}_{T}$. Useing lemmas~\ref{lem:mse1} and~\ref{lem:mse2},

\begin{equation}
\begin{aligned}
    \mathcal{L}_{V}^{T} & \geq \frac{1}{|\mathcal{I}_{p}|}\sum\limits_{I\in \mathcal{I}_{p}}\Pi^{T}(I)( \left | \left | V(I|\theta^{T}) - \mathbb{E}[\widetilde{r}^{t}(I)|x^{t}(I) = 1)]\right | \right |_{2}^{2} + \mathcal{L}_{V^{*}}^{T})
\end{aligned}
\end{equation}

So,
\begin{equation}
\begin{aligned}
    \mathcal{L}_{V}^{T} - \mathcal{L}_{V^{*}}^{T}\geq \frac{1}{|\mathcal{I}_{p}|} \sum\limits_{I\in \mathcal{I}_{p}}\Pi^{T}(I)\epsilon^{T}(I)\\
    \sum\limits_{I\in \mathcal{I}_{p}}\Pi^{T}(I)\epsilon^{T}(I) \leq |\mathcal{I}_{p}|(\mathcal{L}_{V}^{T} - \mathcal{L}_{V^{*}}^{T})
\end{aligned}
\end{equation}

Plugging this into Eq.~\ref{eq:42}, we arrive at

\begin{equation}
    \begin{aligned}
        \bar{R}_{p,\theta}^{T} 
        & \leq (1+\frac{\sqrt{2}}{\sqrt{\rho K}}) \Delta_{\theta} |\mathcal{I}_{p}|\frac{\sqrt{|A|}}{\sqrt{T}} + \frac{4\sqrt{|A|\Delta_{\theta}\mathcal{I}_{p}}}{\sqrt{T}}\sqrt{|\mathcal{I}_{p}|\sum\limits_{t=1}^{T}(\mathcal{L}_{V}^{T} - \mathcal{L}_{V^{*}}^{T})}\\
        & \leq (1+\frac{\sqrt{2}}{\sqrt{\rho K}}) \Delta_{\theta} |\mathcal{I}_{p}|\frac{\sqrt{|A|}}{\sqrt{T}} + 4|\mathcal{I}_{p}|\sqrt{|A|\Delta_{\theta} \epsilon_{\mathcal{L}}}\\
    \end{aligned}
\end{equation}

\begin{equation}
    \begin{aligned}
        \sum\limits_{\theta\in \Theta}Pr(\theta|O_{\chi}^{T})\bar{R}_{p,\theta}^{T} 
        & \leq (1+\frac{\sqrt{2}}{\sqrt{\rho K}}) \sum\limits_{\theta\in \Theta}Pr(\theta|O_{\chi}^{T})\Delta_{\theta} |\mathcal{I}_{p}|\frac{\sqrt{|A|}}{\sqrt{T}}\\
        &+ \frac{4\sum\limits_{\theta\in \Theta}Pr(\theta|O_{\chi}^{T})\sqrt{|A|\Delta_{\theta}\mathcal{I}_{p}}}{\sqrt{T}}\sqrt{|\mathcal{I}_{p}|\sum\limits_{t=1}^{T}(\mathcal{L}_{V}^{T} - \mathcal{L}_{V^{*}}^{T})}\\
        & \leq (1+\frac{\sqrt{2}}{\sqrt{\rho K}})\sum\limits_{\theta\in \Theta}Pr(\theta|O_{\chi}^{T}) \Delta_{\theta} |\mathcal{I}_{p}|\frac{\sqrt{|A|}}{\sqrt{T}} + 4\sum\limits_{\theta\in \Theta}Pr(\theta|O_{\chi}^{T})|\mathcal{I}_{p}|\sqrt{|A|\Delta_{\theta} \epsilon_{\mathcal{L}}}\\
        & \leq (1+\frac{\sqrt{2}}{\sqrt{\rho K}})\sum\limits_{\theta\in \Theta}Pr(\theta|O_{\chi}^{T}) \Delta_{\theta} |\mathcal{I}_{p}|\frac{\sqrt{|A|}}{\sqrt{T}} + 4\sum\limits_{\theta\in \Theta}|\mathcal{I}_{p}|\sqrt{|A|Pr^{2}(\theta|O_{\chi}^{T})\Delta_{\theta} \epsilon_{\mathcal{L}}}\\
        & \leq (1+\frac{\sqrt{2}}{\sqrt{\rho K}})\sum\limits_{\theta\in \Theta}Pr(\theta|O_{\chi}^{T}) \Delta_{\theta} |\mathcal{I}_{p}|\frac{\sqrt{|A|}}{\sqrt{T}} + 4\sum\limits_{\theta\in \Theta}|\mathcal{I}_{p}|\sqrt{|A|Pr(\theta|O_{\chi}^{T})\Delta_{\theta} \epsilon_{\mathcal{L}}}\\
        & = (1+\frac{\sqrt{2}}{\sqrt{\rho K}})\Delta_{\Theta} |\mathcal{I}_{p}|\frac{\sqrt{|A|}}{\sqrt{T}} + 4|\mathcal{I}_{p}|\sqrt{|A|\Delta_{\Theta} \epsilon_{\mathcal{L}}}
    \end{aligned}
\end{equation}
According to Theorem~\ref{the:bayesian_converge}, we know that as time $T$ increases, $\sum\limits_{\theta\in \Theta}Pr(\theta|O_{\chi}^{T})\bar{R}_{p,\theta}^{T}$
  gradually converges. Therefore, $\bar{R}_{p,\Theta}^{T}\simeq\sum\limits_{\theta\in \Theta}Pr(\theta|O_{\chi}^{T})\bar{R}_{p,\theta}^{T} $

So, we can proof theorem~\ref{the:deep_bound}

So far we have assumed that $\mathcal{M}_{V}$ contains all sampled regrets. The number of samples in the memory at iteration $t$ is bounded by $K \cdot |\mathcal{I}_{p}| \cdot t$. Therefore, if$K \cdot |\mathcal{I}_{p}| \cdot T < |\mathcal{M}_{V}|$ then the memory will never be full, and we can make this assumption.

Let $\rho = T^{-\frac{1}{4}}$

\begin{equation}
    P(\bar{R}_{p}^{T} > (1+\frac{\sqrt{2}}{\sqrt{K}}) \Delta |\mathcal{I}_{p}|\frac{\sqrt{|A|}}{T^{-\frac{1}{4}}} + 4|\mathcal{I}_{p}|\sqrt{|A|\Delta \epsilon_{\mathcal{L}}})< T^{-\frac{1}{4}}
\end{equation}

Therefore, for any $\epsilon > 0$,

\begin{equation}
\lim_{T\rightarrow \infty} P(\bar{R}_{p}^{T}-4|\mathcal{I}_{p}|\sqrt{|A|\Delta \epsilon_{\mathcal{L}}}>\epsilon) = 0.
\end{equation}

\section{Pre-code}

\begin{algorithm}
    \caption{Bayesian-CFR (BCFR)}
    \label{alg:BCFR}
\begin{algorithmic}
    \STATE  Initialize cumulative regret  $R(I,a)$ so that it returns $0$ for all inputs for player $p $
    \STATE Initialize strategy  $\sigma(I,a)$ so that it returns $0$ for all inputs for player $p$
    \STATE Initialize type memories $Q_{\theta}$, which is a queue with $m$ length. And Initialize competitor-type memories $Q$, a queue with $n$ length.
    \FOR{CFR Iteration $t=1$ {\bfseries to} $T$}
    \STATE Sample type $\theta$ form the posterior $Pr_{t}(\Theta) $
    \FOR{\textbf{each} player $p$}
    \STATE BCFRTRVERSE$(\emptyset,p,t,1,1,\theta)$ 
    \ENDFOR
    \STATE Evaluate with the competitor and collect data to $Q$
    \STATE Calculate the likelihood $\widehat{Pr}_{m}^{n}$ of $\theta$ using Eqn.~\ref{eqn:likelihood} with training data $Q_{\theta}$ and competitor data $Q$.
    \STATE Update the posterior using Eqn.~\ref{eqn:bayesion}
    \ENDFOR
\end{algorithmic}
\end{algorithm}

\begin{algorithm}
    \caption{BCFRTRVERSE}
    \label{alg:trverse}
\begin{algorithmic}
    \STATE  \textbf{Function:} BCFRTRVERSE$(h,p,t,\pi_{p},\pi_{-p},\theta)$
    \STATE $Q_{\theta} \leftarrow Q_{\theta}\cup \{h\}$
    \IF{$h\in Z$}
    \STATE \textbf{return} $u_i(h)$
    \ELSIF{$h$ is a chance node}
    \STATE Sample an action $a$ from the probability $\sigma_{c}(h)$
    \STATE \textbf{return} BCFRTRVERSE$(h\cdot a,p,t,\pi_{p},\pi_{-p},\theta)$
    \ELSIF{$P(h)=p$}
    \STATE $I \leftarrow$ Information set containing $h$;
    \STATE $u_{\sigma}\leftarrow 0$
    \STATE $u_{\sigma_{I\rightarrow a}}(a)\leftarrow 0$ for all $a\in A(I)$
    \FOR{$a\in A(I)$}
    \IF{$P(h)=p$}
    \STATE $u_{\theta,\sigma_{I\rightarrow a}}(a) \leftarrow$ BCFRTRVERSE$(h\cdot a,p,\sigma^{t}(I,a)\cdot\pi_{p},\pi_{(-p)},\theta)$
    \ELSE
    \STATE $u_{\theta,\sigma_{I\rightarrow a}}(a) \leftarrow$ BCFRTRVERSE$(h\cdot a,p,\pi_{p},\sigma^{t}(I,a)\cdot\pi_{(-p)},\theta)$
    \ENDIF
    \STATE $u_{\sigma,\theta}\leftarrow u_{\sigma,\theta}+\sigma^{t}(I,a)\cdot u_{\theta,\sigma_{I\rightarrow a}}(a)$
    \ENDFOR
    \IF{$P(h) = p$}
    \FOR{$a\in A(I)$}
    \STATE $r(I,a)\leftarrow r(I,a)+\Pr(\theta|h)\cdot\pi_{-p}\cdot(v_{\sigma_{I\rightarrow a}}(a)-v_{\sigma})$
    \STATE $s(I,a)\leftarrow s(I,a)+\pi_{-p}\cdot\sigma^{t}(I,a)$
    \ENDFOR
    \STATE $\sigma^{t+1}(I)\leftarrow$ regret-matching values computed using Eqn~\ref{eqn:strategy} and regret table $r_I$
    \ENDIF
    \STATE \textbf{return} $u_{\sigma^{t}}$
    \ENDIF
\end{algorithmic}
\end{algorithm}

\begin{algorithm}
    \caption{Deep BCFR}
    \label{alg:BCFR}
\begin{algorithmic}
    \STATE  Initialize cumulative regret network $R(I,a|\psi_{p,\theta})$with parameters $\psi_{p,\theta}$ so that it returns 0 for all inputs.
    \STATE Initialize regret memories $\mathcal{M}_{r,1},\mathcal{M}_{r,2}$ and strategy memory $\mathcal{M}_{\pi,\Theta}$ 
    \STATE Initialize type memories $Q_{\theta}$, which is a queue with $m$ length. And Initialize competitor-type memories $Q$, a queue with $n$ length.
    \FOR{CFR Iteration $t=1$ {\bfseries to} $T$}
    \STATE Sample type $\theta$ form $Pr_{t}(\Theta) $
    \FOR{\textbf{each} player $p$}
    \FOR{traverse $k=1$ {\bfseries to} $K$}
    \STATE DEEPBCFRTRVERSE$(\emptyset,R,\psi_{1,\theta},\psi_{2,\theta},\mathcal{M}_{r,p},\mathcal{M}_{\pi,\Theta},\theta)$
    \ENDFOR
    \STATE train $\psi_{p,\theta}$ on loss for player $p$ $\mathcal{L} = \mathbb{E}_{(I,\widetilde{r})\sim M_{r,p}}[\sum_{a}((R(\cdot|\psi^{t}_{p,\theta})+\widetilde{r})^{+} - R(\cdot|\psi^{t+1}_{p,\theta}))^{2}]$
    \ENDFOR
    \STATE Evaluate with competitor and collect data to $Q$
    \STATE Calculate the likelihood $\widehat{Pr}_{m}^{n}$ of $\theta$ using Eqn.~\ref{eqn:likelihood} with training data $Q_{\theta}$ and competitor data $Q$.
    \STATE Update the posterior using Eqn.~\ref{eqn:bayesion}
    \ENDFOR
    \STATE Train $\psi_{\pi,p}$ on loss for player $p$
    \STATE $\mathcal{L} = \mathbb{E}_{(I,\widetilde{\pi})\sim M_{\pi,p}}[\sum_{a}((S(\cdot|\psi^{t}_{p})+\widetilde{\pi})^{+} - S(\cdot|\psi^{t+1}_{p}))^{2}]$
    \STATE \textbf{return} $\psi_{\pi,V},\psi_{\pi,T}$
\end{algorithmic}
\end{algorithm}

\begin{algorithm}
    \caption{DEEPBCFRTRVERSE}
    \label{alg:trverse}
\begin{algorithmic}
    \STATE  \textbf{Function:} DEEPBCFRTRVERSE$(h,p,\psi_{p},\psi_{-p},\mathcal{M}_{r,p},\mathcal{M}_{\pi,(-p)},\theta)$
    \STATE $Q_{\theta} \leftarrow Q_{\theta}\cup \{h\}$
    \IF{$h\in Z$}
    \STATE \textbf{return} $u_i(h)$
    \ELSIF{$h$ is a chance node}
    \STATE Sample an action $a$ from the probability $\sigma_{c}(h)$
    \STATE \textbf{return} DEEPBCFRTRVERSE$(h\cdot a,p,\psi_{p},\psi_{-p},\mathcal{M}_{r,p},\mathcal{M}_{\pi,(-p)},\theta)$
    \ELSIF{$P(h)=p$}
    \STATE $I \leftarrow$ Information set containing $h$;
    \STATE $\sigma^{t}(I) \leftarrow$ Strategy of Information set $I$ computed from $R(I,a|\psi_{p})$ using regret matching ;
    \FOR{$a\in A(I)$}
    \STATE $u_{\theta}(a) \leftarrow$ DEEPBCFRTRVERSE$(h\cdot a,p,\psi_{p},\psi_{-p},\mathcal{M}_{r,p},\mathcal{M}_{\pi,(-p)},\theta)$
    \ENDFOR
    \STATE $u_{\sigma^{t},\theta} \leftarrow \sum_{a\in A(I)}\sigma^{t}(I,a)u_{\theta}(a)$ 
    \FOR{$a \in A(I)$}
    \STATE $r(I,a)\leftarrow Pr(\theta|h)\cdot u(a)-u_{\sigma^{t}}$
    \ENDFOR
    \STATE Insert the infoset and its action regret values $(I,t,r(I))$ into regret memory $\mathcal{M}_{r,p}$
    \STATE \textbf{return} $u_{\sigma^{t}}$
    \ELSE
    \STATE $I\leftarrow$ Information set containing $h$;
    \STATE $\sigma^{t}(I) \leftarrow$ Strategy of Information set $I$ computed from $R(I,a|\psi_{-p})$ using regret matching $+$;
    \STATE Insert the infoset and its strategy $(I,t,\sigma^{t}(I))$ into strategy memory $\mathcal{M}_{\pi,(-p)}$;
    \STATE Sample an action $a$ from the probability distribution $\sigma^{t}(I)$
    \STATE \textbf{return} DEEPBCFRTRVERSE$(ha,p,\psi_{p},\psi_{-p},\mathcal{M}_{r,p},\mathcal{M}_{\pi,(-p)},\theta)$
    \ENDIF
\end{algorithmic}
\end{algorithm}

\end{proof}

\section{Experment}

\subsection{Experment setting}
Mixed-1=$Normal:conservative:aggressive = 10\%:80\%:10\%$, Mixed-2=$Normal:conservative:aggressive = 20\%:60\%:20\%$,Mixed-3=$Normal:conservative:aggressive = 30\%:40\%:30\%$

Mixed-4=$Normal:conservative:aggressive = 80\%:10\%:10\%$, Mixed-5=$Normal:conservative:aggressive = 60\%:20\%:20\%$,Mixed-6=$Normal:conservative:aggressive = 40\%:30\%:30\%$

Mixed-7=$Normal:conservative:aggressive = 10\%:10\%:80\%$, Mixed-8=$Normal:conservative:aggressive = 20\%:20\%:60\%$,Mixed-9=$Normal:conservative:aggressive = 30\%:30\%:40\%$

\subsection{Ablation studies}

\begin{figure*}[h]
\centering
  \subfigure[Mixed-4]{\includegraphics[width=0.33\textwidth]{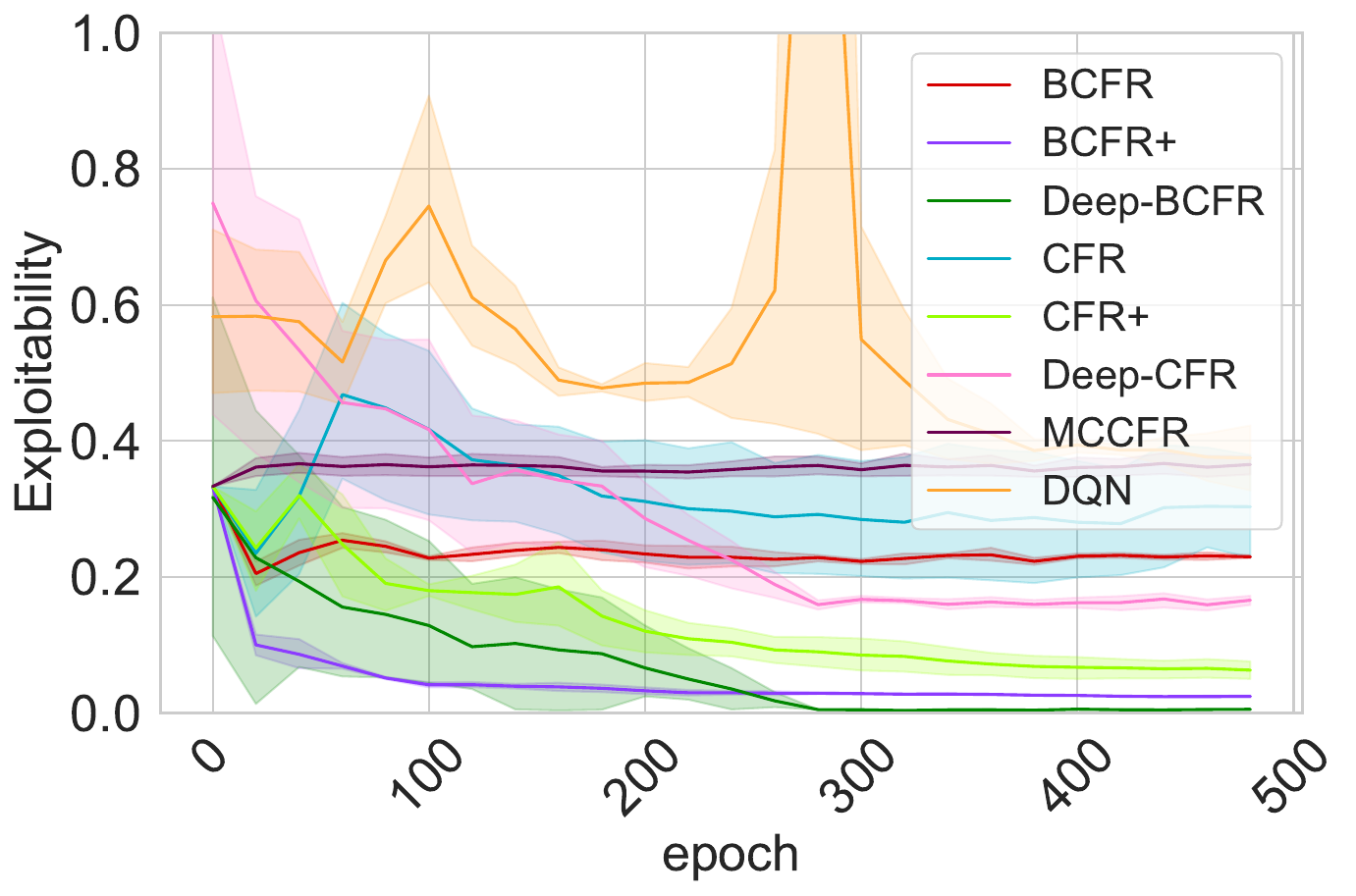}}\hfill
  \subfigure[Mixed-5]{\includegraphics[width=0.33\textwidth]{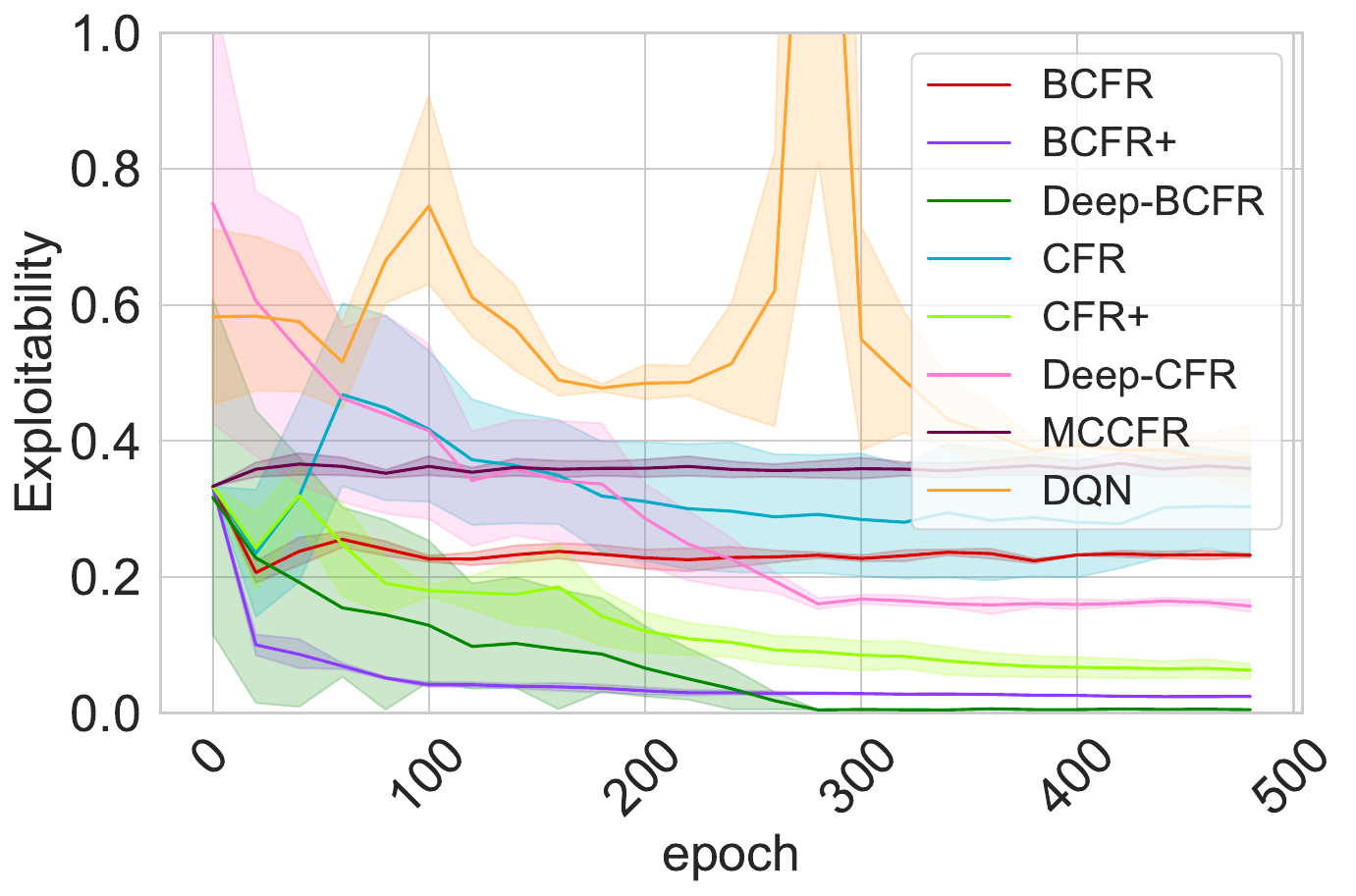}}\hfill
  \subfigure[Mixed-6]{\includegraphics[width=0.33\textwidth]{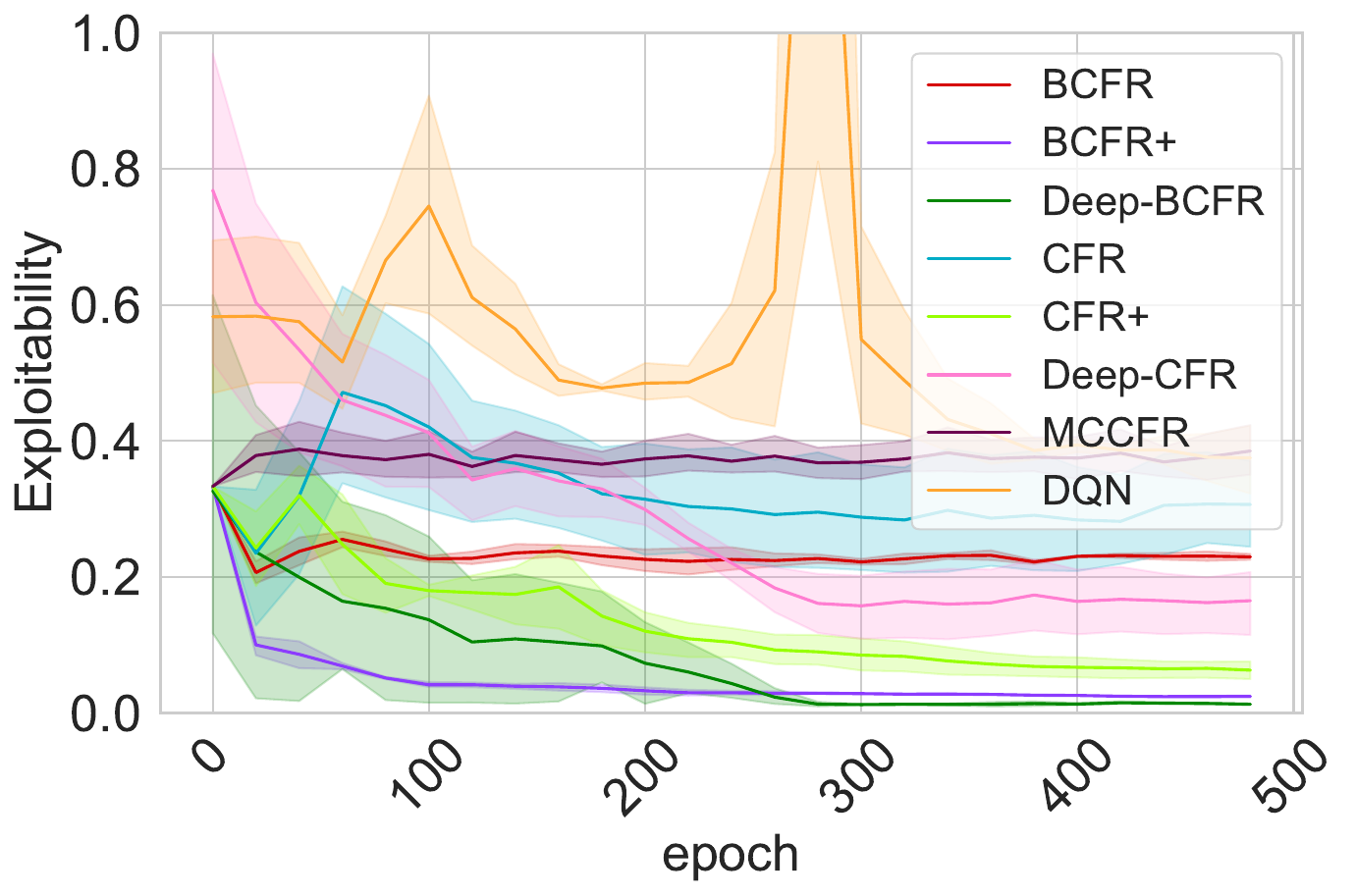}}\hfill
  \\
  \subfigure[Mixed-7]{\includegraphics[width=0.33\textwidth]{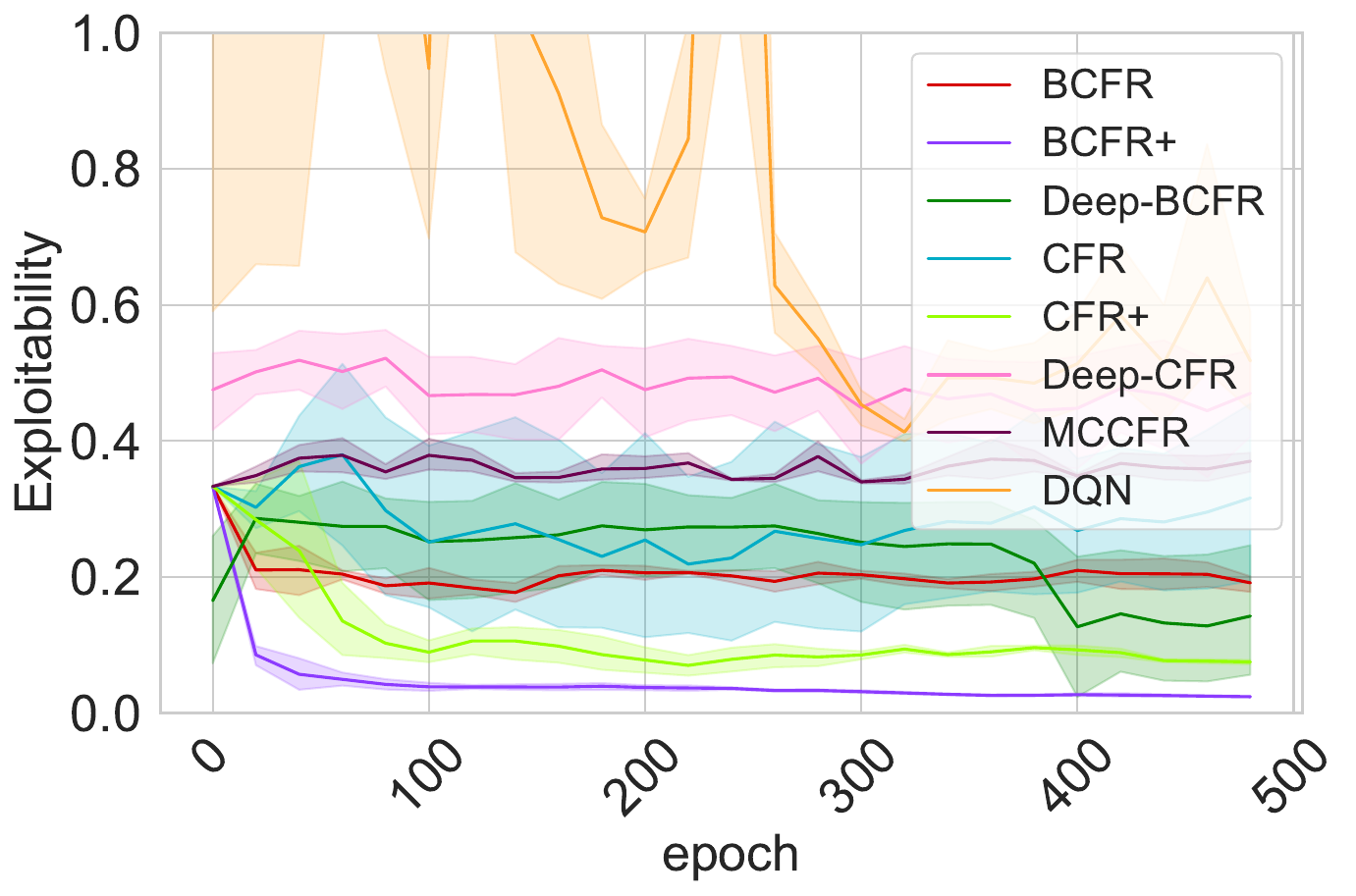}}\hfill
  \subfigure[Mixed-8]{\includegraphics[width=0.33\textwidth]{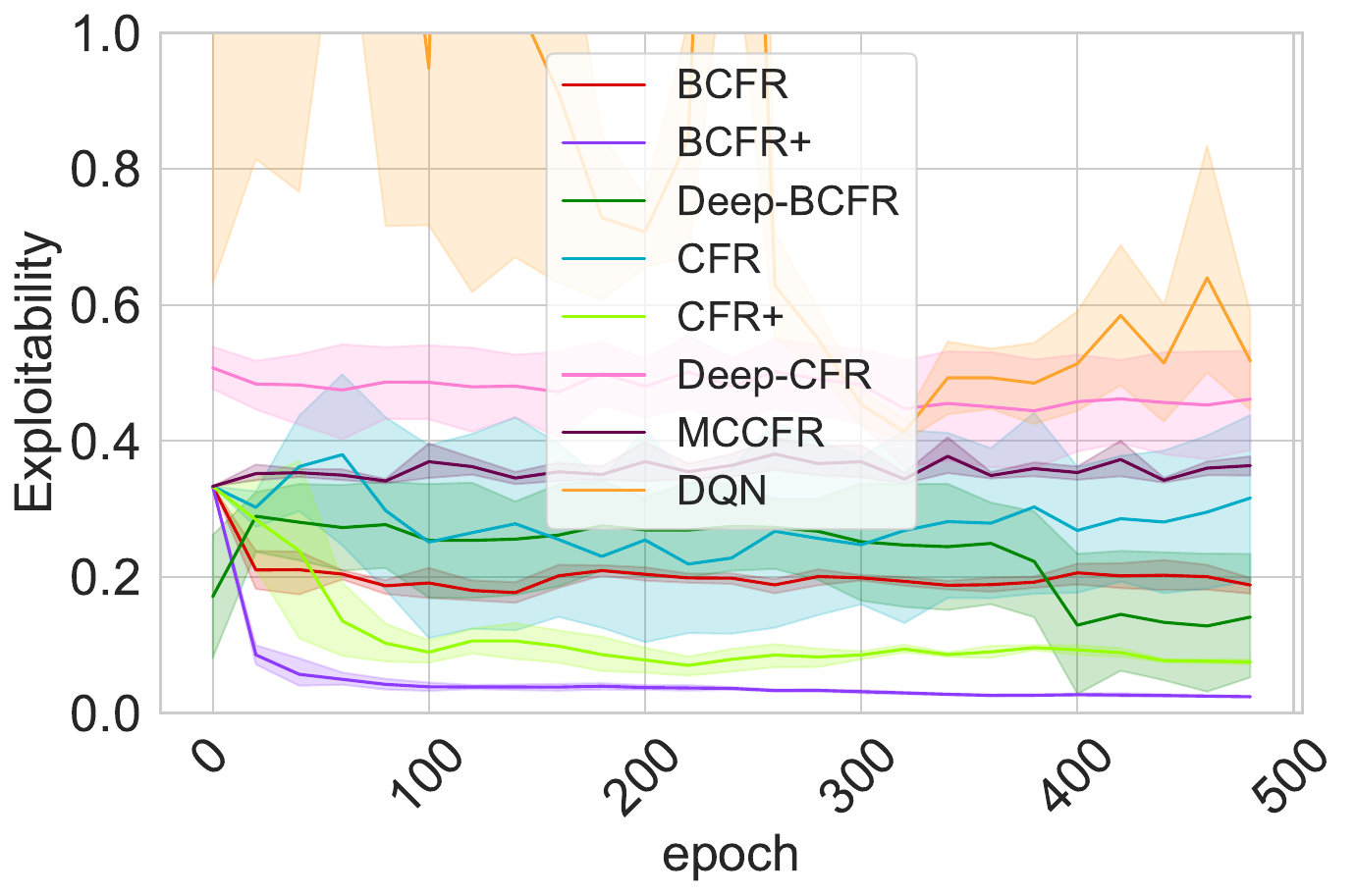}}\hfill
  \subfigure[Mixed-9]{\includegraphics[width=0.33\textwidth]{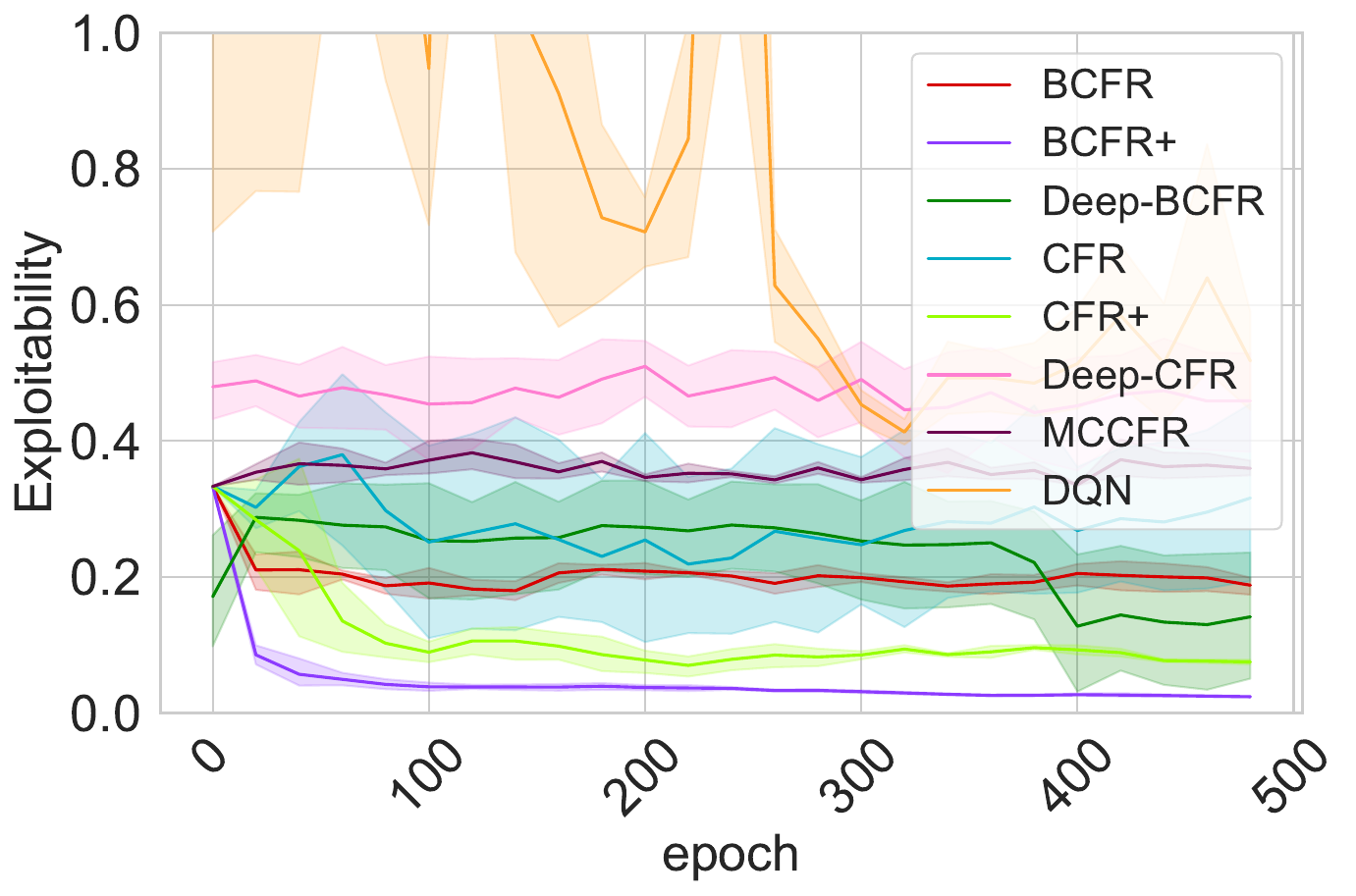}}\hfill
  \caption{Exploitability under three player's types: normal (left), conservative (middle) and aggressive (right).} 
\end{figure*}

To validate the effectiveness of our approach, we designed ablation studies using various methods: CFR, CFR+, and Deep CFR. We conducted experiments using Bayesian-CFR with full knowledge of the player's type. We employed Bayesian-CFR with the player using a uniform distribution and CFR with a known player type. Under uniform distribution of player types, Bayesian-CFR effectively degenerates to using only CFR. Additionally, employing Bayesian-CFR with knowledge of the player's type produced results similar to  CFR with known player types.
In addition, we refer to the following articles~\cite{zhang2024collaborative,zou2024distributed,wang2023osteoporotic,fang2022coordinate,fang2023implementing,zhou2022pac,zhou2023value,mei2022bayesian,mei2024projection,chen2023real,chen2024deep,zhou2023value,chen2021bringing}
\begin{figure*}[h]
\label{fig:Exp2}
\centering
  \subfigure[]{\includegraphics[width=0.33\textwidth]{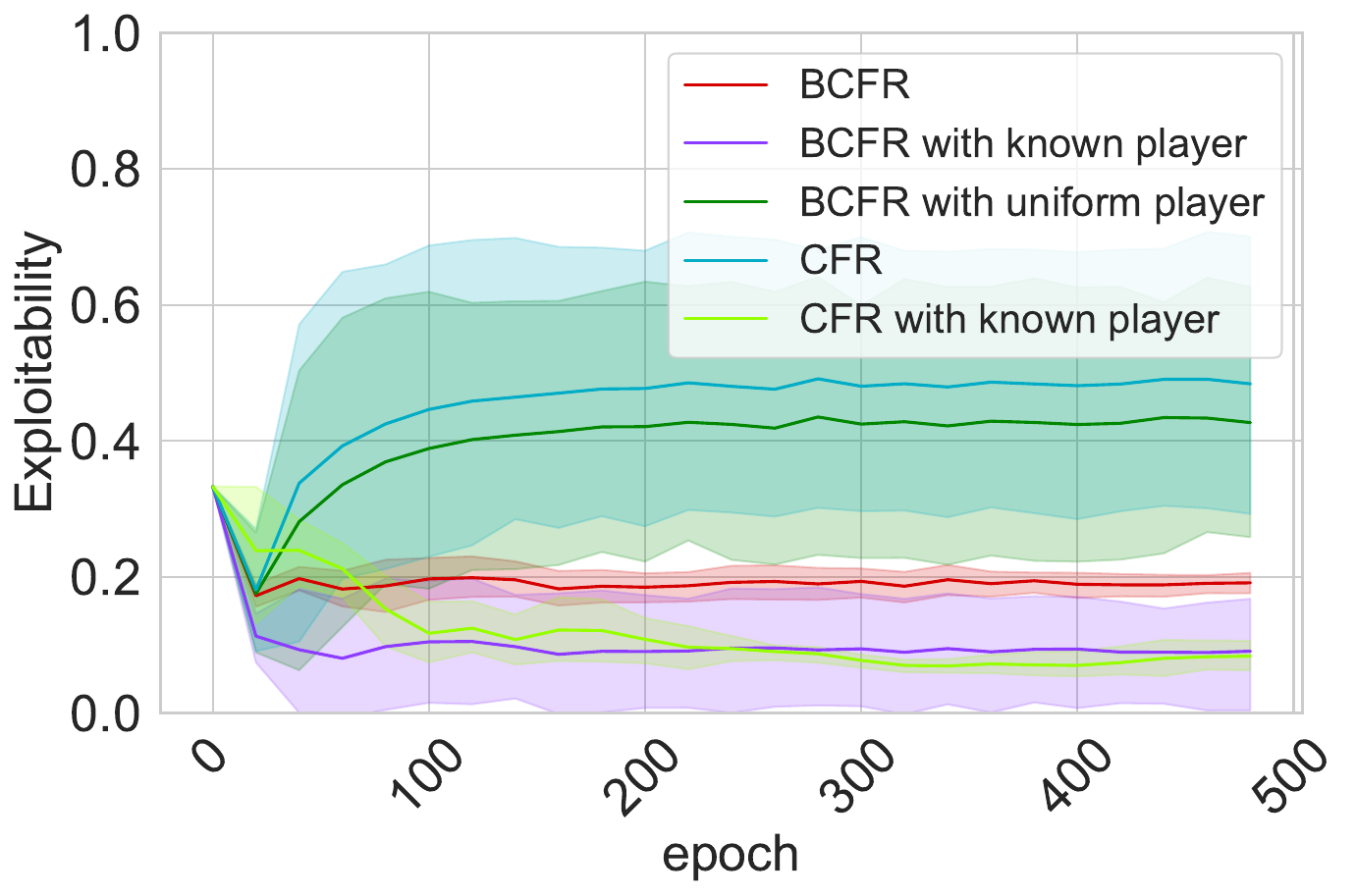}}\hfill
  \subfigure[]{\includegraphics[width=0.33\textwidth]{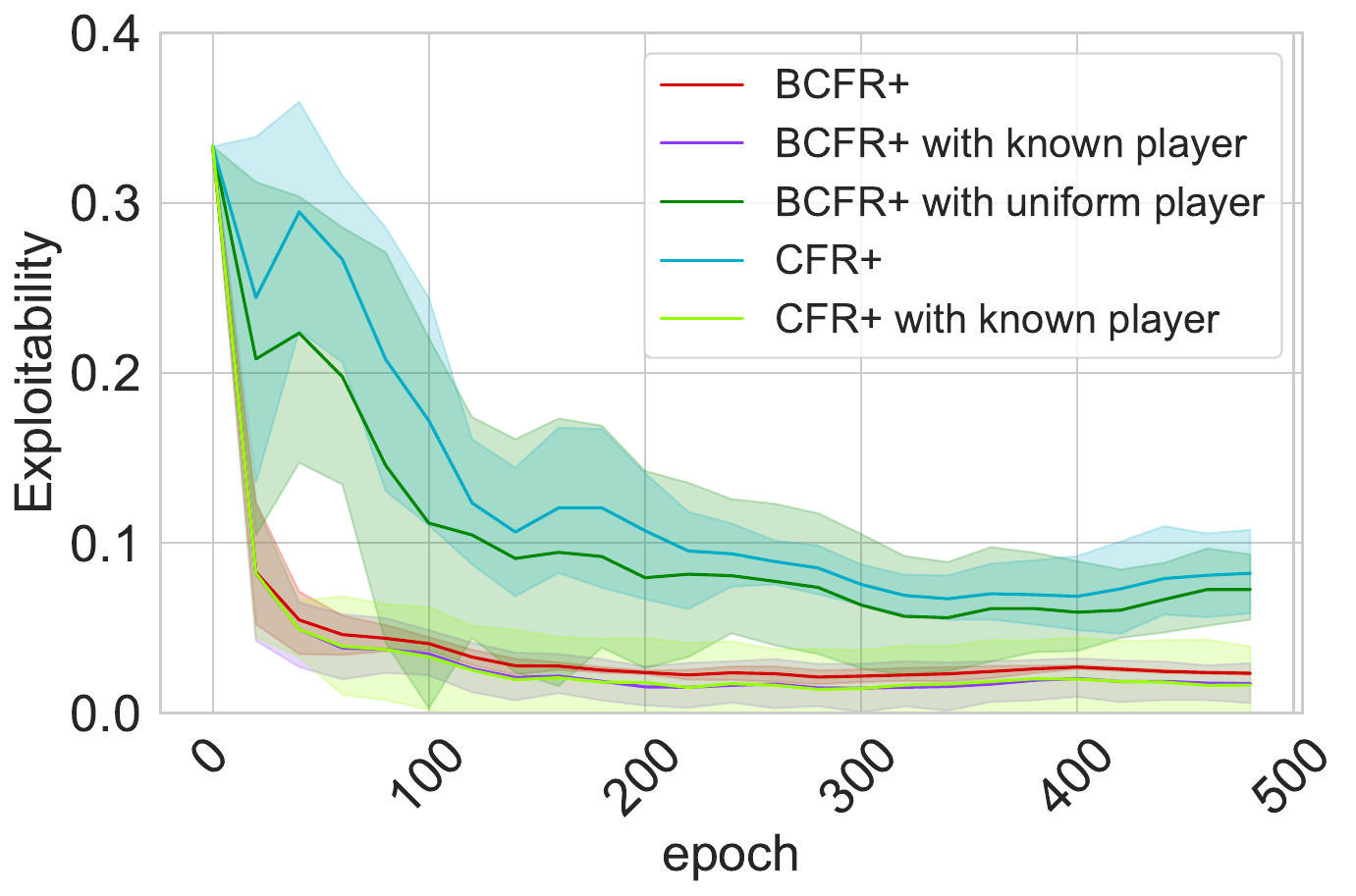}}\hfill
  \subfigure[]{\includegraphics[width=0.33\textwidth]{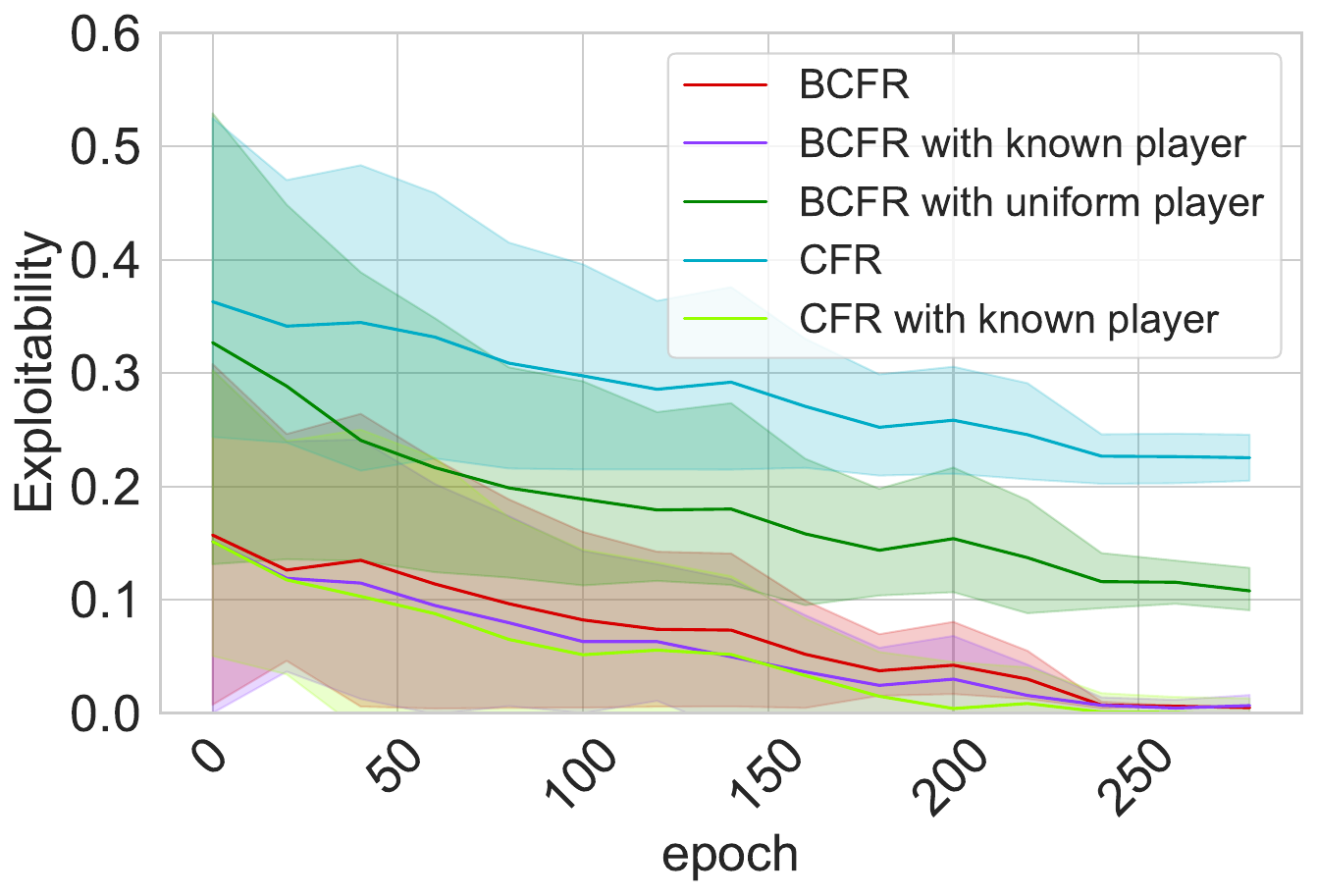}}\hfill
  \caption{This experiment focuses on players of the conservative type. From left to right, the baseline methods used are CFR, CFR+, and Deep CFR. Within each experiment, there are several variations: BCFR, BCFR with knowledge of the player's type, BCFR using uniformly distributed player's type, CFR, and CFR with knowledge of the player's type.}
\end{figure*}

\end{document}